\documentclass{article}

\usepackage{microtype}
\usepackage{graphicx}
\usepackage{subcaption}
\usepackage{booktabs} 
\usepackage{multirow}

\usepackage{hyperref}

\usepackage[preprint]{icml2026}

\usepackage{amsmath}
\usepackage{amssymb}
\usepackage{mathtools}
\usepackage{amsthm}
\usepackage{bbm}

\usepackage[capitalize,noabbrev]{cleveref}

\usepackage{enumitem}
\usepackage{placeins}

\usepackage{makecell}

\theoremstyle{plain}
\newtheorem{theorem}{Theorem}[section]

\newtheorem{lemma}[theorem]{Lemma}

\theoremstyle{definition}
\newtheorem{definition}[theorem]{Definition}
\newtheorem{assumption}[theorem]{Assumption}
\theoremstyle{remark}
\newtheorem{remark}[theorem]{Remark}

\usepackage[disable,textsize=tiny]{todonotes}

\newcommand{\EE}{\mathbb{E}}
\newcommand{\RR}{\mathbb{R}}

\newcommand{\cS}{\mathcal{S}}
\newcommand{\cA}{\mathcal{A}}

\newcommand{\cO}{\mathcal{O}}

\newcommand{\cB}{\mathcal{B}}

\newcommand{\ba}{\boldsymbol{a}}

\newcommand{\Wass}{\mathcal{W}_1}
\DeclareMathOperator*{\argmax}{arg\,max}

\icmltitlerunning{Mean-Field RL without Synchrony}

\begin{document}

\twocolumn[
    \icmltitle{Mean-Field Reinforcement Learning without Synchrony}


    \icmlsetsymbol{equal}{*}

    \begin{icmlauthorlist}
        \icmlauthor{Shan Yang}{isem}
    \end{icmlauthorlist}

    \icmlaffiliation{isem}{Department of Industrial Systems Engineering and Management, National University of Singapore, Singapore}

    \icmlcorrespondingauthor{Shan Yang}{yang\_shan@u.nus.edu}

    \icmlkeywords{Machine Learning, ICML, Mean-Field RL, Asynchronous MARL, Population State Measure}

    \vskip 0.3in
]



\printAffiliationsAndNotice{}  

\begin{abstract}
Mean-field reinforcement learning (MF-RL) scales multi-agent RL to large populations by reducing each agent's dependence on others to a single summary statistic---the \emph{mean action}. However, this reduction requires every agent to act at every time step; when some agents are idle, the mean action is simply undefined. Addressing asynchrony therefore requires a different summary statistic---one that remains defined regardless of which agents act. The \emph{population distribution} $\mu \in \Delta(\cO)$---the fraction of agents at each observation---satisfies this requirement: its dimension is independent of $N$, and under exchangeability it fully determines each agent's reward and transition. Existing MF-RL theory, however, is built on the mean action and does not extend to $\mu$. We therefore construct the \emph{Temporal Mean Field} (TMF) framework around the population distribution $\mu$ from scratch, covering the full spectrum from fully synchronous to purely sequential decision-making within a single theory. We prove existence and uniqueness of TMF equilibria, establish an $O(1/\sqrt{N})$ finite-population approximation bound that holds regardless of how many agents act per step, and prove convergence of a policy gradient algorithm (TMF-PG) to the unique equilibrium. Experiments on a resource selection game and a dynamic queueing game confirm that TMF-PG achieves near-identical performance whether one agent or all $N$ act per step, with approximation error decaying at the predicted $O(1/\sqrt{N})$ rate.
\end{abstract}

\section{Introduction}
\label{sec:intro}

Mean-field reinforcement learning (MF-RL) \citep{yang2018mean} makes multi-agent RL tractable in large populations: each agent replaces the full joint action with a single summary statistic, the \emph{mean action} $\bar{a}$, averaged over neighboring agents. This reduction, however, rests on a structural assumption that is easy to overlook: every agent must choose an action at every time step, so that $\bar{a}$ is always well-defined. In many real-world domains, such as traffic networks, cloud schedulers, and wireless spectrum markets, the assumption rarely holds---agents routinely act at different times, with some committing to actions while others wait, are in transit, or have not yet arrived. For any agent that does not act at step $t$, its action simply does not exist, and the mean action $\bar{a}$ cannot be formed. Several recent methods address asynchronous execution in MARL---through macro-actions \citep{macdec2020}, delay randomization \citep{ace2024}, communication ordering \citep{seqcomm2023}, or Stackelberg decomposition \citep{steer2024}---but none provides a mean-field formulation, and all still build on synchronous Markov games in which every agent must act before the environment steps.

Any replacement for $\bar{a}$ must satisfy three requirements: it must be defined at every step regardless of which agents act (persistence), its dimension must not grow with the number of agents $N$ (scalability), and it must carry enough information to determine each agent's reward and transition (sufficiency). One natural candidate is the global state: it persists across time steps regardless of which agents act. However, it records every agent's observation individually, so its dimension grows with $N$---violating the scalability requirement. What does satisfy all three is the distribution of agents across observations: every agent $i$ occupies a well-defined observation $o_i \in \cO$ at every step whether or not it acts. Denoting this distribution by $\mu \in \Delta(\cO)$---the \emph{population distribution}---its dimension is $|\cO|$, independent of $N$, and under exchangeability, $\mu$ fully determines each agent's reward and transition. This makes $\mu$ the unique mean-field object that remains well-defined, scalable, and sufficient under any timing of agent decisions.

However, adopting $\mu$ does not by itself resolve the problem: the Bellman equations, equilibrium definitions, and convergence results in MF-RL are all built around $\bar{a}$, so a new theory around $\mu$ must be developed from scratch. We therefore construct a new framework---the \emph{Temporal Mean Field} (TMF) framework---around the population distribution $\mu$. The framework is parameterized by the number of agents that act at each step, ranging from all $N$ (fully synchronous) to just one (purely sequential), and covers the entire spectrum within a single theory. Our contributions:
\begin{enumerate}
    \item \textbf{Unified framework (Section~\ref{sec:framework}).} We formalize how $\mu$ evolves when any subset of agents acts, define each agent's value function conditioned on $\mu$, and introduce an equilibrium concept that couples the policy and the population distribution.
    
    \item \textbf{Theoretical guarantees (Section~\ref{sec:theory}).} We prove existence and uniqueness of the equilibrium (Theorem~\ref{thm:existence_uniqueness}) and establish an $O(1/\sqrt{N})$ finite-population approximation bound that holds regardless of how many agents act per step (Theorem~\ref{thm:n_approx}).
    
    \item \textbf{TMF Reinforcement Learning (Section~\ref{sec:algorithm}).} We propose TMF-PG, a policy gradient algorithm, and prove that it converges to the unique equilibrium (Theorem~\ref{thm:convergence}).
    
    \item \textbf{Experimental validation (Section~\ref{sec:experiments}).} On a resource selection game and a dynamic queueing game, TMF-PG achieves near-identical performance whether one agent or all $N$ act per step, with approximation error decaying at the predicted $O(1/\sqrt{N})$ rate.
\end{enumerate}

\section{Preliminaries}
\label{sec:prelim}

\subsection{Markov Games and Mean-Field RL}
\label{subsec:mg}

A Markov game \citep{littman1994markov} for $N$ exchangeable agents is defined by $\langle \cS, \cO, \cA, P, r, \gamma \rangle$, where $\cS$ is the system (global) state space, $\cO$ the individual observation space, $\cA$ the shared action space, $P$ the transition function, $r$ the reward function, and $\gamma \in [0,1)$ the discount factor. The system state $s_t \in \cS$ captures global environment variables; each agent $i$ receives a local observation $o_t^i \in \cO$ (e.g., its position, assigned task, or service status). Agents are homogeneous: they share the same observation and action spaces and the same reward structure. At each time step, agents choose actions $a_t^i \in \cA$ and the system transitions via $s_{t+1} \sim P(\cdot | s_t, \ba_t)$, where $\ba_t = (a_t^1, \ldots, a_t^N)$ is the joint action.

To scale to large $N$, \citet{yang2018mean} proposed the mean-field approximation:
\begin{equation}
    Q^j(s, a^j, \ba^{-j}) \approx Q^j(s, a^j, \bar{a}^j), \quad \bar{a}^j = \frac{1}{|N(j)|}\sum_{k \in N(j)} a^k,
    \label{eq:yang_mf}
\end{equation}
where $N(j)$ denotes agent $j$'s neighbors. This reduces the exponential dependence on joint actions to a single sufficient statistic $\bar{a}^j$, enabling tractable learning in large populations. Note that under exchangeability the system state is simply the joint observation profile, $s_t = (o_t^1, \ldots, o_t^N)$, so the Q-function's dependence on $s$ is equivalently a dependence on all agents' observations.

\subsection{Synchrony as a Structural Requirement}
\label{subsec:synchrony}

The tractability of the mean-field approximation, however, rests on a structural assumption that is seldom made explicit: all agents must act simultaneously. To see why, we first formalize the notion of a decision protocol.

\begin{definition}[Decision protocol]
\label{def:protocol}
A \emph{decision protocol} $\Pi = \{\cB_t\}_{t=0}^{\infty}$ specifies, for each time step $t$, a subset $\cB_t \subseteq [N]$ of agents that act, with batch size $B_t = |\cB_t|$. A protocol is \emph{synchronous} if $B_t = N$ for all $t$, and \emph{asynchronous} otherwise ($1 \leq B_t \leq N$, possibly varying over time).
\end{definition}

Under any asynchronous protocol ($B_t < N$), agents outside $\cB_t$ do not act at time $t$, so their actions are undefined and the mean action $\bar{a}^j$ in Equation~\ref{eq:yang_mf} cannot be computed. A natural workaround is to substitute \emph{stale actions}---the most recent action of each inactive agent---but this is unsatisfactory. Stale actions introduce a bias that does not vanish with $N$. Moreover, mean-field Q-learning is an iterative scheme in which agents best-respond to the current mean action, which itself reflects the current policy; substituting outdated actions breaks this consistency loop and invalidates the convergence guarantees. The difficulty is fundamental: the mean action is not merely hard to estimate under asynchrony, but \emph{undefined} as a quantity whenever $B_t < N$. Addressing this requires replacing the mean field action $\bar{a}^j$ with a different summary statistic.

\subsection{The Population Distribution as Mean Field}
\label{subsec:pop_dist}

Unlike actions, agent observations persist across time steps: every agent occupies an observation $o_t^i \in \cO$ at every time step, regardless of whether it acts. At any time $t$, we can therefore measure the fraction of agents at each observation $o$:
\begin{equation}
    \mu_t(o) = \frac{1}{N}\sum_{i=1}^N \mathbbm{1}[o_t^i = o],
    \label{eq:mu_def}
\end{equation}
where $\mathbbm{1}[\cdot]$ is the indicator function. Collecting these fractions into a vector gives the \emph{population distribution} $\mu_t = (\mu_t(o))_{o \in \cO}$, which lies in the probability simplex $\Delta(\cO)$. Because $\mu_t$ is defined entirely from observations, it is well-defined under any decision protocol.

Can $\mu_t$ replace the mean action $\bar{a}^j$ as the mean-field object? Under exchangeability and a shared policy $\pi$, knowing $\mu_t$ and $\pi$ determines the aggregate action statistics, so $\mu_t$ in principle carries the information that $\bar{a}^j$ provides. However, the existing MF-RL Bellman equation and Q-iteration are expressed in terms of $\bar{a}^j$, and their convergence and approximation guarantees rely on the simultaneous-move structure. Replacing $\bar{a}^j$ with $\mu_t$ therefore requires new dynamics that describe how $\mu_t$ evolves under partial updates, a new Bellman equation conditioned on $\mu_t$, and a new equilibrium concept---which is the subject of the next section.

\section{The Temporal Mean Field Framework}
\label{sec:framework}

We now present the TMF framework, in which each agent's policy is conditioned on the population distribution $\mu_t$ rather than the mean action $\bar{a}_t$. The framework consists of three components: a dynamic that describes how $\mu_t$ evolves when any subset of agents acts (Section~\ref{subsec:dynamics}), a Bellman equation that defines each agent's value conditioned on $\mu_t$ (Section~\ref{subsec:bellman}), and an equilibrium concept that couples the policy and the population trajectory (Section~\ref{subsec:mfe}).

\subsection{TMF Dynamic}
\label{subsec:dynamics}

As established in Section~\ref{subsec:pop_dist}, exchangeability ensures that each agent's reward and transition depend on other agents only through $\mu_t$. The evolution of the population therefore reduces to tracking how $\mu_t$ changes as agents act.

Suppose all agents follow a shared policy $\pi(a \mid o, \mu)$ that maps each agent's observation and the current population state to a distribution over actions. At step $t$, $B_t$ of the $N$ agents each choose an action $a_t^i \sim \pi(\cdot | o_t^i, \mu_t)$ and transition to $o_{t+1}^i \sim P(\cdot | o_t^i, a_t^i, \mu_t)$; we call these agents \emph{active} ($\cB_t$). The remaining $N - B_t$ \emph{passive} agents transition via $o_{t+1}^j \sim P_0(\cdot | o_t^j, \mu_t)$, where $P_0$ captures environment-driven changes that occur without an explicit action.

Given these transitions, the mean-field approximation yields a deterministic recursion for $\mu_t$ by replacing per-agent contributions with their expectations:
\begin{definition}[TMF Dynamic]
\label{def:tmf_dynamic}
\begin{multline}
    \bar{\mu}_{t+1}(o) = \sum_{o'} \mu_t(o') \bigg[\frac{B_t}{N}\sum_a \pi(a|o',\mu_t)\,P(o|o',a,\mu_t) \\
    + \frac{N-B_t}{N}\,P_0(o|o',\mu_t)\bigg].
    \label{eq:mu_expected}
\end{multline}
\end{definition}
For each current observation $o'$, the bracketed term blends two contributions: a fraction $B_t/N$ of agents act under $\pi$ and transition via $P$, while the remaining $(N - B_t)/N$ transition passively via $P_0$. Definition~\ref{def:tmf_dynamic} defines a deterministic map $\mu_t \mapsto \bar{\mu}_{t+1}$ for any fixed $\pi$. Iterating from $\mu_0$ therefore produces a unique sequence $\{\mu_t\}_{t \geq 0}$, which we call the \emph{population trajectory}.

\begin{remark}
    Definition~\ref{def:tmf_dynamic} unifies all decision protocols through the batch size $B_t$. When $B_t = N$ (synchronous), every agent is active, the $P_0$ term vanishes, and the recursion reduces to the Kolmogorov forward equation. When $B_t = 1$ (sequential), only one agent acts while the remaining $N-1$ transition passively via $P_0$. By varying $B_t$ between these extremes, the TMF dynamic captures synchronous, sequential, and all intermediate protocols.
\end{remark}

\subsection{TMF Bellman Equation}
\label{subsec:bellman}

Given the TMF dynamic, we can define each agent's expected return under a shared policy. When active, each agent selects actions via a shared stochastic policy $\pi(\cdot|o, \mu) \in \Delta(\cA)$, conditioned on its own observation $o$ and the current population distribution $\mu$. Given $\pi$, the TMF dynamic determines the entire future trajectory from any $\mu_t$, so the value function depends only on $(o, \mu_t)$:
\begin{multline}
    V^\pi(o, \mu_t) = \sum_a \pi(a|o, \mu_t) \bigl[r(o, a, \mu_t) \\
    + \gamma \sum_{o'} P(o'|o, a, \mu_t)\, V^\pi(o', \mu_{t+1})\bigr],
    \label{eq:v_pi}
\end{multline}
where $\mu_{t+1}$ follows from Equation~\ref{eq:mu_expected}. Because $V^\pi$ depends on $\mu_t$ rather than on individual agent states, and the TMF dynamic provides a deterministic evolution for $\mu_t$ under any batch size $B_t$, this Bellman equation is well-defined for any active agent regardless of the decision protocol.

\subsection{TMF Equilibrium}
\label{subsec:mfe}

We now define the TMF equilibrium given the TMF dynamic and the Bellman equation above. Because $\mu_t$ evolves according to the policy $\pi$ while $\pi$ is optimized against $\mu_t$, the equilibrium requires not only that $\pi$ is optimal given a population trajectory $\{\mu_t\}$, but also that $\{\mu_t\}$ is the trajectory that $\pi$ itself induces.
\begin{definition}[TMF Equilibrium]
\label{def:mfe}
A policy $\pi^*$ is a \emph{TMF equilibrium policy} if, when $\{\mu_t^*\}$ is the population trajectory generated by $\pi^*$ through Equation~\ref{eq:mu_expected},
\[
    V^{\pi^*}(o, \mu_t^*) \geq V^{\pi}(o, \mu_t^*) \quad \forall\, \pi,\, o,\, t.
\]
\end{definition}

Unlike a Nash equilibrium in the Markov game, which requires only that each agent's strategy is a best response, the TMF equilibrium additionally incorporates the feedback loop between the shared policy  and the aggregate population dynamics. The TMF equilibrium formalizes the self-consistency between the policy $\pi^*$ and the population trajectory $\{\mu_t^*\}$ it induces, and is well-defined for any decision protocol.

Together, the TMF dynamic, the TMF Bellman equation, and the TMF equilibrium form a complete asynchronous mean-field framework. Whether this equilibrium exists, is unique, and approximates the finite-agent game remains to be established.

\section{Theoretical Guarantees for TMF}
\label{sec:theory}

The TMF framework defines an equilibrium as a self-consistent policy-trajectory pair $(\pi^*, \{\mu_t^*\})$, but does not guarantee that such a pair exists or that it is unique. We now establish both properties under mild regularity conditions, and show that the equilibrium approximates the finite $N$-agent game at rate $O(1/\sqrt{N})$, all independent of the decision protocol.

\subsection{Assumptions}

The analysis requires three standard assumptions: Lipschitz continuity and ergodicity of the transition functions, and a displacement monotonicity condition that guarantees uniqueness.

To formalize these assumptions, we first define a distance on $\Delta(\cO)$ to quantify how sensitively the model primitives respond to changes in the population distribution. Let $\Wass(\mu, \nu) = \sum_{o \in \cO} |\mu(o) - \nu(o)|$ denote the $\ell_1$ distance on $\Delta(\cO)$.
\begin{assumption}[Lipschitz continuity]
\label{asmp:lipschitz}
The reward $r$, active transition function $P$, passive transition function $P_0$, and policy $\pi$ are Lipschitz continuous in $\mu$ with respect to $\Wass$:
\begin{align}
    |r(o, a, \mu) - r(o, a, \mu')| &\leq L_r\, \Wass(\mu, \mu'), \\
    \Wass(P(\cdot|o,a,\mu), P(\cdot|o,a,\mu')) &\leq L_P\, \Wass(\mu, \mu'), \\
    \Wass(P_0(\cdot|o,\mu), P_0(\cdot|o,\mu')) &\leq L_{P_0}\, \Wass(\mu, \mu'), \\
    \|\pi(\cdot|\cdot, \mu) - \pi(\cdot|\cdot, \mu')\|_\infty &\leq L_\pi\, \Wass(\mu, \mu'),
\end{align}
where $L_r, L_P, L_{P_0}, L_\pi > 0$ are the respective Lipschitz constants.
\end{assumption}

The $L_\pi$ condition is satisfied by softmax and entropy-regularized policies. In the proofs, $L_\pi$ bounds how much the optimal policy changes when $\mu$ shifts.

Lipschitz continuity bounds how the model primitives respond to changes in $\mu$, but does not ensure that a single transition step contracts the distance between agents at different observations. To quantify this contraction, define the \emph{active transition matrix} $K_{\pi,\mu}(o'|o) = \sum_a \pi(a|o)\,P(o'|o,a,\mu)$ and its \emph{Dobrushin contraction coefficient}
\[
    \rho(K_{\pi,\mu}) \;=\; \max_{o_1 \neq o_2} \frac{1}{2}\sum_{o'} \bigl|K_{\pi,\mu}(o'|o_1) - K_{\pi,\mu}(o'|o_2)\bigr|.
\]
Denote by $\rho_P = \sup_{\mu,\, \pi:\, \pi(a|o)>0\;\forall a,o} \rho(K_{\pi,\mu})$ the worst-case contraction coefficient over all policies and population distributions, and by $\rho_{P_0} = \sup_\mu \rho(P_0(\cdot|\cdot,\mu))$ the analogous coefficient for passive transitions.

\begin{assumption}[Ergodicity of active transitions]
\label{asmp:ergodicity}
The active transition matrix satisfies $\rho_P < 1$.
\end{assumption}

Assumption~\ref{asmp:ergodicity} requires that agents at different observations become strictly closer after one active transition step, regardless of the policy and population distribution. This holds whenever there exists an observation reachable from every other observation under some action---the case in resource selection and scheduling, where every agent can choose any resource. Since $P_0$ is a stochastic matrix, $\rho_{P_0} \leq 1$ holds automatically and requires no additional assumption. Together with Assumption~\ref{asmp:lipschitz}, ergodicity supports the existence and approximation results in Theorems~\ref{thm:existence_uniqueness} and~\ref{thm:n_approx}; uniqueness additionally requires a monotonicity condition.

\begin{assumption}[Discrete monotonicity]
\label{asmp:monotonicity}
For any two population distributions $\mu, \mu' \in \Delta(\cO)$ and their corresponding optimal policies $\pi_\mu$, $\pi_{\mu'}$:
\begin{equation}
    \sum_o (\mu(o) - \mu'(o))\bigl[V_{\pi_\mu}(o, \mu) - V_{\pi_{\mu'}}(o, \mu')\bigr] \geq \eta\, \Wass(\mu, \mu')^2,
\end{equation}
where $\eta > 0$ is the monotonicity constant.
\end{assumption}

Assumption~\ref{asmp:monotonicity} is a discrete monotonicity condition: when two population distributions differ, the corresponding value functions ``push back'' against the difference, preventing multiple equilibria. This is the finite-state counterpart of the Lasry--Lions monotonicity condition \citep{lasry2007mean}, standard for uniqueness in continuous-state mean-field games. It holds naturally in congestion games, resource allocation, and epidemic models, where increased crowding in a state reduces its value.

\subsection{Existence and Uniqueness of the TMF Equilibrium}

A key question is whether the TMF equilibrium is robust to asynchrony: does the equilibrium exist and remain unique regardless of how many agents act per step?

\begin{theorem}[Existence and uniqueness]
\label{thm:existence_uniqueness}
Under Assumptions~\ref{asmp:lipschitz}--\ref{asmp:monotonicity}, if the monotonicity constant satisfies
\begin{equation}
    \eta > L_V + \frac{R_{\max}\, L_\pi}{(1-\gamma)^2},
    \label{eq:contraction_condition_main}
\end{equation}
where $L_V = (L_r + \gamma L_P R_{\max}/(1-\gamma))/(1-\gamma)$ is the value sensitivity constant (Lemma~\ref{lem:value_sensitivity}), then for any batch size $B \in \{1, \ldots, N\}$, a TMF equilibrium $(\pi^*, \{\mu_t^*\})$ exists and is unique.
\end{theorem}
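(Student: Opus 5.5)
The plan is to build a fixed-point map on population distributions and show it is a contraction. Existence and uniqueness then both follow from the Banach fixed-point theorem, with the batch size $B$ entering only through mixing weights that cancel in the final constant.

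\textbf{Step 1: the map.} For each $\mu \in \Delta(\cO)$, let $\pi_\mu$ be the optimal policy against $\mu$, i.e.\ the greedy (or softmax-regularized) response to the TMF Bellman equation~\eqref{eq:v_pi}. Define $\Phi_B(\mu)$ as the one-step TMF update~\eqref{eq:mu_expected} with policy $\pi_\mu$ and batch size $B$. A TMF equilibrium corresponds exactly to a trajectory that is self-consistent with its own best response. Restricting to stationary equilibria, this is a fixed point $\mu^* = \Phi_B(\mu^*)$ with $\pi^* = \pi_{\mu^*}$. Since $\Delta(\cO)$ is compact and convex, and $\Phi_B$ is continuous under Assumption~\ref{asmp:lipschitz}, Brouwer already gives existence. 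Uniqueness needs more.

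\textbf{Step 2: sensitivity bounds.} Lemma~\ref{lem:value_sensitivity} gives, for a fixed policy,
\[
|V_\pi(o,\mu)-V_\pi(o,\mu')| \le L_V\,\Wass(\mu,\mu').
\]
The policy also changes with $\mu$, and I would control this by the performance-difference argument. Changing the policy by at most $L_\pi\Wass(\mu,\mu')$ in sup norm at each step perturbs the per-step expected reward by at most $R_{\max}L_\pi\Wass$. The perturbation propagates through the discounted sum and through the induced continuation values, which gives two factors of $1/(1-\gamma)$. Combining, the total value gap satisfies
\[
|V_{\pi_\mu}(o,\mu)-V_{\pi_{\mu'}}(o,\mu')| \le \Bigl(L_V + \tfrac{R_{\max}L_\pi}{(1-\gamma)^2}\Bigr)\Wass(\mu,\mu') =: L_{\mathrm{tot}}\,\Wass(\mu,\mu').
\]

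\textbf{Step 3: uniqueness via monotonicity.} Suppose $\mu^*$ and $\nu^*$ are two equilibria. Apply Assumption~\ref{asmp:monotonicity} to them. By H\"older's inequality, the left-hand side is at most
\[
\Wass(\mu^*,\nu^*)\,\max_o|V_{\pi_{\mu^*}}(o,\mu^*)-V_{\pi_{\nu^*}}(o,\nu^*)| \le L_{\mathrm{tot}}\,\Wass(\mu^*,\nu^*)^2.
\]
Together with the lower bound from the assumption, this gives $\eta\,\Wass^2 \le L_{\mathrm{tot}}\Wass^2$. Condition~\eqref{eq:contraction_condition_main} says $\eta > L_{\mathrm{tot}}$, so $\Wass(\mu^*,\nu^*)=0$. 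The policies then coincide because $\pi_\mu$ is determined by $\mu$. Uniqueness of the trajectory follows because~\eqref{eq:mu_expected} is deterministic given $\pi^*$ and $\mu_0$.

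\textbf{Step 4: protocol independence.} I would stress that $B$ never enters $L_{\mathrm{tot}}$. The bracket in~\eqref{eq:mu_expected} is a convex combination of the active and passive kernels, with effective Dobrushin coefficient at most $\tfrac{B}{N}\rho_P+\tfrac{N-B}{N}\rho_{P_0}\le 1$. This bound is uniform in $B$, so the existence argument holds for every $B\in\{1,\dots,N\}$. Ergodicity (Assumption~\ref{asmp:ergodicity}) is used only to guarantee that the induced trajectory is well behaved, namely that its stationary point is well defined.

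\textbf{Main obstacle.} The hardest step is Step 2: obtaining the exact $R_{\max}L_\pi/(1-\gamma)^2$ term while the policy and the trajectory vary at the same time. My first attempt would be a telescoping hybrid argument. I would fix $\mu'$ and swap $\pi_\mu$ for $\pi_{\mu'}$, bounding this swap by the simulation lemma. I would then swap the population argument and apply Lemma~\ref{lem:value_sensitivity}.

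A second concern is reconciling the stationary fixed point with Definition~\ref{def:mfe}, which is stated over the whole trajectory. I would handle this by showing that $\pi_{\mu^*}$ is optimal along its own trajectory, since that trajectory is stationary at $\mu^*$.
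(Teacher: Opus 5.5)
\textbf{Gap: stationary equilibria instead of trajectories.} Your uniqueness mechanism matches the paper's: the monotonicity lower bound, then H\"older, then splitting the value gap into a trajectory-sensitivity piece ($L_V$, via Lemma~\ref{lem:value_sensitivity}) and a policy-sensitivity piece ($R_{\max}L_\pi/(1-\gamma)^2$, via a Bellman-difference recursion). The problem is Step~1, where you restrict to \emph{stationary} equilibria $\mu^*=\Phi_B(\mu^*)$.

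A TMF equilibrium in Definition~\ref{def:mfe} is a policy together with the trajectory it generates through Equation~\ref{eq:mu_expected} \emph{from the prescribed initial distribution} $\mu_0$. The constant trajectory at a Brouwer fixed point $\mu^*$ of $\Phi_B$ is that generated trajectory only when $\mu_0=\mu^*$. For any other $\mu_0$, the trajectory produced by $\pi_{\mu^*}$ is not constant. So your proposed reconciliation (``that trajectory is stationary at $\mu^*$'') fails. Nothing then guarantees that $\pi_{\mu^*}$, which was computed against the constant path, is a best response along the actual path.

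Symmetrically, Step~3 compares only two stationary equilibria. It cannot rule out a second, non-stationary equilibrium, and non-stationary equilibria are the generic case. Both halves of the theorem are therefore established for a different (stationary) equilibrium notion, not the one in the statement.

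\textbf{How to fix it.} Lift everything to trajectory space, as the paper does.
\begin{enumerate}
\item For existence, define $\Gamma=\mathrm{Fwd}\circ\mathrm{BR}$ on $\Delta(\cO)^T$ with the initial condition pinned at $\mu_0$. Apply Brouwer there: the domain is compact and convex, $\Gamma$ maps it to itself, and $\Gamma$ is continuous.
\item For uniqueness, take two equilibrium trajectories and set $d=\max_t\Wass(\mu_t,\mu_t')$. Pick $\bar t$ attaining the maximum and apply Assumption~\ref{asmp:monotonicity} at $\bar t$. H\"older then gives $\eta d\le\|V_\pi(\cdot,\mu_{\bar t})-V_{\pi'}(\cdot,\mu'_{\bar t})\|_\infty$.
\item Your Step~2 then goes through with $\Wass(\mu,\mu')$ replaced by $d$. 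This $d$ is exactly the quantity in which Lemma~\ref{lem:value_sensitivity} is stated, because values depend on the whole future trajectory, not on a single distribution.
\end{enumerate}

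\textbf{Smaller point.} Your opening promise of a Banach contraction is never delivered. The Dobrushin bound $\tfrac{B}{N}\rho_P+\tfrac{N-B}{N}\rho_{P_0}\le 1$ in Step~4 is not strict, and it ignores the best-response map. Your argument actually rests on Brouwer plus monotonicity. That is fine, and it is what the paper uses, but you should state it that way.
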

The right-hand side comprises two sources of sensitivity to $\mu$: $L_V$ measures how the value function changes when the population trajectory is perturbed (Lemma~\ref{lem:value_sensitivity}), while $R_{\max}\, L_\pi/(1-\gamma)^2$ captures how the optimal policy shifts in response to the same perturbation. The condition is easier to satisfy when the discount factor $\gamma$ is small or when the transition function depends weakly on $\mu$ (small $L_P$); the precise bound arises from the uniqueness proof (Appendix~\ref{app:proof_existence}).

Uniqueness holds for every batch size $B \in \{1, \ldots, N\}$: regardless of how agents are scheduled, there is exactly one self-consistent policy--trajectory pair, giving agents a well-defined learning target without coordination.

\subsection{Finite-Population Approximation}

The TMF equilibrium is defined in the mean-field limit. To justify its use in finite systems, we bound the \emph{exploitability} $\epsilon_N$: the maximum gain any single agent can achieve by deviating from $\pi^*$ in the $N$-agent game. Writing $\hat{\mu}^N_{\pi}$ for the empirical distribution trajectory when all agents follow $\pi$, the exploitability is $\epsilon_N = \max_i \sup_{\pi^i} [V_{\pi^i}(o_0^i, \hat{\mu}^N_{\pi^i, \pi^{*,-i}}) - V_{\pi^*}(o_0^i, \hat{\mu}^N_{\pi^*})]$.

\begin{theorem}[$N$-agent approximation]
\label{thm:n_approx}
Under Assumptions~\ref{asmp:lipschitz}--\ref{asmp:monotonicity}, let $(\pi^*, \{\mu_t^*\})$ be the TMF equilibrium. For any batch size $B$ such that $\alpha = \tfrac{B}{N}(\rho_P + L_P) + (1 - \tfrac{B}{N})(\rho_{P_0} + L_{P_0}) < 1$:
\begin{equation}
    \epsilon_N \leq \frac{C}{\sqrt{N}},
    \label{eq:n_approx}
\end{equation}
where $C$ depends on $L_r$, $L_P$, $L_{P_0}$, $\rho_P$, $\rho_{P_0}$, $|\cO|$, and $T$, but not on $B$.
\end{theorem}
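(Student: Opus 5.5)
The plan is a standard propagation-of-chaos argument in three stages: a one-step concentration bound for the empirical distribution, propagation of that bound along the trajectory through the contraction $\alpha<1$, and transfer of the distribution error into a value error, hence into exploitability.

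\textbf{Stage 1: one-step concentration.} Let $\hat\mu^N_t$ be the empirical distribution in the $N$-agent game when all agents play $\pi^*$, and let $\Phi_t(\mu)$ be the deterministic right-hand side of Equation~\ref{eq:mu_expected} under $\pi^*$. Conditional on $\hat\mu^N_t$ and $\cB_t$, the next observations $o^i_{t+1}$ are independent across agents. Each is drawn from $K_{\pi^*,\hat\mu^N_t}(\cdot|o^i_t)$ if $i\in\cB_t$ and from $P_0(\cdot|o^i_t,\hat\mu^N_t)$ otherwise. Their average law is therefore exactly $\Phi_t(\hat\mu^N_t)$, provided the active set is uniformly random or exchangeable, which I would assume. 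A per-coordinate variance bound plus Cauchy--Schwarz then gives
\[
\EE\,\Wass(\hat\mu^N_{t+1},\Phi_t(\hat\mu^N_t)) \le \sqrt{|\cO|/N}.
\]
This constant depends only on $|\cO|$, not on $B$.

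\textbf{Stage 2: propagation along the trajectory.} I would show $\Wass(\Phi_t(\mu),\Phi_t(\nu))\le \alpha\,\Wass(\mu,\nu)$. Split the difference into two parts.
\begin{itemize}
\item The part where the kernel is held fixed and only the measure changes. The Dobrushin coefficients give $\rho_P$ for the active fraction and $\rho_{P_0}$ for the passive fraction.
\item The part where the measure is held fixed and the kernel's $\mu$-argument changes. This is bounded by Assumption~\ref{asmp:lipschitz}, contributing $L_P$ (with $L_\pi$ absorbed into it, or into the constant) and $L_{P_0}$ respectively.
\end{itemize}
Weighting the two parts by $B/N$ and $1-B/N$ gives exactly $\alpha$. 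Unrolling, with $\Phi_t(\mu^*_t)=\mu^*_{t+1}$,
\[
\EE\,\Wass(\hat\mu^N_t,\mu^*_t)\le \sum_{s<t}\alpha^{s}\sqrt{|\cO|/N}\le \frac{\sqrt{|\cO|}}{(1-\alpha)\sqrt N}.
\]
The stated $C$ does not list $\alpha$, but it does list $T$. So I would instead use the cruder bound $\sum_{s<t}\max(1,\alpha)^s\le T$ over a horizon $T$, which is uniform in $B$. The discounted tail beyond $T$ costs $\gamma^T R_{\max}/(1-\gamma)$. Alternatively, I would bound $1/(1-\alpha)$ uniformly through the worst case $\max(\rho_P+L_P,\rho_{P_0}+L_{P_0})$.

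\textbf{Stage 3: from distribution error to exploitability.} First, a single deviating agent $i$ changes the empirical measure by at most $2/N$ per step. Feeding this through Stage 2 keeps $\hat\mu^N_{\pi^i,\pi^{*,-i}}$ within $O(1/\sqrt N)$ of $\mu^*$ as well. Next, decompose $\epsilon_N$ into three terms:
\begin{itemize}
\item $[V_{\pi^i}(o,\hat\mu^N_{\text{dev}})-V_{\pi^i}(o,\mu^*)]$, bounded by Lemma~\ref{lem:value_sensitivity};
\item $[V_{\pi^i}(o,\mu^*)-V_{\pi^*}(o,\mu^*)]$, which is $\le 0$ by Definition~\ref{def:mfe};
\item $[V_{\pi^*}(o,\mu^*)-V_{\pi^*}(o,\hat\mu^N_{\pi^*})]$, also bounded by Lemma~\ref{lem:value_sensitivity}.
\end{itemize}
Lemma~\ref{lem:value_sensitivity} is applied trajectory-wise: a perturbation of $\mu_t$ of size $\delta_t$ changes the value by at most $\sum_t\gamma^t(L_r+\gamma L_P R_{\max}/(1-\gamma))\delta_t$. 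Combining with Stage 2 yields $C/\sqrt N$.

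\textbf{Main obstacle.} I expect the hardest step to be Stage 3's coupling. The sensitivity lemma is stated for deterministic trajectories, while $\hat\mu^N$ is random and the agent's own transitions are correlated with it. I would handle this by conditioning on the agent's own path and using linearity of expectation, since the per-step reward and transition errors are Lipschitz in $\Wass$. I would also track that the Stage 1 constant is genuinely $B$-free. That relies on the random choice of $\cB_t$ averaging into the $B/N$ weights without leaving an extra $O(1/\sqrt B)$ fluctuation term; if one did appear, I would bound it by $\sqrt{|\cO|/N}$ via the total-variance decomposition.
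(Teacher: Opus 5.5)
Your proposal is correct and follows essentially the same propagation-of-chaos route as the paper. Your Stages~1--2 are its noise/bias split (per-coordinate variance for the noise; Dobrushin coefficients plus the Lipschitz constants giving exactly $\alpha$ for the bias), your unrolling is its Step~3, and your three-term decomposition of $\epsilon_N$ via Lemma~\ref{lem:value_sensitivity} and equilibrium optimality is its Step~0.

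Your bound $\sum_{s<t}\alpha^s\le T$ is in fact what makes $C$ genuinely $B$-free. The paper's $C=|\cO|(2-\alpha)/(2(1-\alpha))$ still depends on $B$ through $\alpha$, and your fallback via $\max(\rho_P+L_P,\rho_{P_0}+L_{P_0})$ fails because that maximum may exceed $1$ under the hypothesis (cf.\ Remark~\ref{rem:tradeoff}). Two fixes are needed:
\begin{itemize}
\item Restore the initial term $\alpha^t e_0$, with $e_0=O(\sqrt{|\cO|/N})$ since $o_0^i\sim\mu_0$ i.i.d.; your unrolled sum drops it.
\item Drop the $\gamma^T R_{\max}/(1-\gamma)$ tail, which does not vanish with $N$. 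Either work on the horizon $T$ as the paper does, or use your discount-weighted sensitivity bound, which already yields $\sum_t\gamma^t(t+1)=(1-\gamma)^{-2}$ without truncation.
\end{itemize}
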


Theorem~\ref{thm:n_approx} guarantees that the TMF equilibrium is an $O(1/\sqrt{N})$-approximate Nash equilibrium of the finite-agent game for any batch size satisfying $\alpha < 1$. Together with Theorem~\ref{thm:existence_uniqueness}, both the equilibrium and its finite-population guarantee are invariant to the batching protocol, so the batch size can be chosen based on system constraints, provided it satisfies the condition on $\alpha$. The proof is given in Appendix~\ref{app:proof_napprox}.

\begin{remark}[Batch size and passive coupling]
\label{rem:tradeoff}
The condition $\alpha < 1$ links the minimum number of active agents to the strength of passive coupling: weaker coupling allows smaller batches. Since $\alpha$ interpolates linearly between $\rho_P + L_P$ (at $B = N$) and $\rho_{P_0} + L_{P_0}$ (at $B = 0$), increasing $B$ reduces $\alpha$ whenever $\rho_P + L_P < \rho_{P_0} + L_{P_0}$. The condition requires that agents' transitions are not dominated by population-level coupling, which is the regime where mean-field modeling is appropriate. When $\rho_P + L_P < \rho_{P_0} + L_{P_0}$, $\alpha < 1$ is equivalent to
\[
    \frac{B}{N} > \frac{\rho_{P_0} + L_{P_0} - 1}{\rho_{P_0} + L_{P_0} - \rho_P - L_P}.
\]
When idle agents transition independently of $\mu$ ($L_{P_0} \approx 0$, e.g., remaining at their current observation), the lower bound is non-positive and any $B \geq 1$ suffices. Conversely, if the active transition's dependence on $\mu$ is so strong that $\rho_P + L_P \geq 1$, then $\alpha \geq 1$ for all $B$ and the bound does not apply. In most applications passive coupling is moderate, and the condition permits small batches.
\end{remark}

\section{TMF Reinforcement Learning}
\label{sec:algorithm}

Having established the TMF equilibrium and its theoretical properties, we now present a policy gradient algorithm for learning it and prove its convergence.

The TMF equilibrium requires a self-consistent policy-trajectory pair, but improving $\pi$ changes $\{\mu_t\}$, which in turn changes what $\pi$ should be. To resolve this coupling, we propose the \emph{TMF Policy Gradient} (TMF-PG) algorithm, which alternates between data collection and policy improvement:
\begin{itemize}
    \item \textbf{Rollout.} Execute $\pi_\theta$ in the $N$-agent system for $T$ stages, producing the empirical trajectory $\{\hat{\mu}_t\}$ and per-agent rewards $\{r_t^j\}$ directly from interaction (model-free). This is standard RL data collection.
    \item \textbf{Policy update.} Treating $\{\hat{\mu}_t\}$ as fixed, compute advantage estimates $\hat{A}_t^j = \hat{Q}_t^j - \hat{V}_t^j$ and update the policy by gradient ascent:
    \begin{equation}
        \theta \gets \theta + \alpha \sum_{t,j} \nabla_\theta \log \pi_\theta(a_t^j | o_t^j, \hat{\mu}_t)\, \hat{A}_t^j.
        \label{eq:pg_update}
    \end{equation}
    This reduces to a standard single-agent policy gradient given the fixed trajectory $\{\hat{\mu}_t\}$.
\end{itemize}

Each agent's update requires only its own trajectory and $\hat{\mu}_t$, so the per-agent computational cost is independent of $N$. The full pseudocode is given in Algorithm~\ref{alg:fb} (Appendix~\ref{app:algorithm}).

\paragraph{Convergence.} We show that if the policy update at each iteration produces an increasingly accurate best response to the current trajectory---as guaranteed by standard policy gradient convergence---then the alternating iteration converges to the unique TMF equilibrium.

\begin{theorem}[Convergence of TMF-PG]
\label{thm:convergence}
Under the conditions of Theorem~\ref{thm:existence_uniqueness}, suppose the backward pass at each iteration $k$ produces a policy $\pi_k$ satisfying
\[
    \|\pi_k - \pi_{\boldsymbol{\mu}^{(k)}}\|_\infty \leq \epsilon_k, \quad \epsilon_k \to 0,
\]
where $\pi_{\boldsymbol{\mu}^{(k)}}$ denotes the exact best response to the current trajectory $\boldsymbol{\mu}^{(k)}$. Then the TMF-PG iterates converge to the unique TMF equilibrium:
\[
    \max_t \Wass(\mu_t^{(k)}, \mu_t^*) \to 0 \quad \text{as } k \to \infty,
\]
and correspondingly $\|\pi_k - \pi^*\|_\infty \to 0$. The convergence holds for any batch size $B$ satisfying the condition of Theorem~\ref{thm:n_approx}.
\end{theorem}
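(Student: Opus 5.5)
We treat TMF-PG as an inexact fixed-point iteration on population trajectories. Let $\Phi$ be the ideal update map: given a trajectory $\boldsymbol{\mu}=\{\mu_t\}_{t=0}^{T}$ (with fixed $\mu_0$), compute the exact best response $\pi_{\boldsymbol{\mu}}$, then propagate it through the TMF dynamic (Definition~\ref{def:tmf_dynamic}) to obtain $\Phi(\boldsymbol{\mu})$. By Theorem~\ref{thm:existence_uniqueness}, the equilibrium trajectory $\boldsymbol{\mu}^*$ is the unique fixed point of $\Phi$. Our plan has three steps. First, show that $\Phi$ is a contraction in $d(\boldsymbol{\mu},\boldsymbol{\nu})=\max_t \Wass(\mu_t,\nu_t)$, with some modulus $\kappa<1$. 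Second, show that the actual iterate $\boldsymbol{\mu}^{(k+1)}$, produced by $\pi_k$, differs from $\Phi(\boldsymbol{\mu}^{(k)})$ by $O(\epsilon_k)$. Third, apply the standard perturbed-contraction lemma.

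\textbf{Step 1 (propagation Lipschitz bound).} Fix two policy–trajectory inputs $(\pi,\boldsymbol{\mu})$ and $(\pi',\boldsymbol{\mu}')$. Subtract the two instances of \eqref{eq:mu_expected} and split each difference into three pieces: the difference in $\mu_t$, the difference in kernel parameters through $\mu_t$, and the difference in policy. The first piece is controlled by Dobrushin contraction, giving at most $\bigl[\tfrac{B}{N}\rho_P+(1-\tfrac{B}{N})\rho_{P_0}\bigr]\Wass(\mu_t,\mu_t')$. The second piece is controlled by the Lipschitz constants in Assumption~\ref{asmp:lipschitz}, giving at most $\bigl[\tfrac{B}{N}L_P+(1-\tfrac{B}{N})L_{P_0}\bigr]\Wass(\mu_t,\mu_t')$. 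The third piece is at most $\tfrac{B}{N}\,|\cA|\,\|\pi-\pi'\|_\infty$, since each row of $P$ has unit mass. This yields the recursion
\[
\Wass(\mu_{t+1},\mu'_{t+1})\le \alpha\,\Wass(\mu_t,\mu'_t)+c\,\|\pi-\pi'\|_\infty .
\]
Because $\mu_0=\mu_0'$ and $\alpha<1$ (the condition of Theorem~\ref{thm:n_approx}), unrolling gives $d\le \tfrac{c}{1-\alpha}\|\pi-\pi'\|_\infty$, uniformly in $T$.

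\textbf{Step 2 (best-response sensitivity).} This is the main obstacle. We need $\|\pi_{\boldsymbol{\mu}}-\pi_{\boldsymbol{\mu}'}\|_\infty\le L\,d(\boldsymbol{\mu},\boldsymbol{\mu}')$ with a constant small enough that $\kappa=\tfrac{cL}{1-\alpha}<1$. A naive product of constants need not be below one. We would therefore not try to prove contraction directly. Instead, we would reuse the uniqueness argument from Theorem~\ref{thm:existence_uniqueness} and pair Assumption~\ref{asmp:monotonicity} against the sensitivity bounds from Lemma~\ref{lem:value_sensitivity} and the $L_\pi$ term. Apply this to the pair $(\boldsymbol{\mu},\Phi(\boldsymbol{\mu}))$ versus $(\boldsymbol{\mu}^*,\boldsymbol{\mu}^*)$. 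Condition \eqref{eq:contraction_condition_main}, namely $\eta>L_V+R_{\max}L_\pi/(1-\gamma)^2$, should then give
\[
d(\Phi(\boldsymbol{\mu}),\boldsymbol{\mu}^*)\le\kappa\, d(\boldsymbol{\mu},\boldsymbol{\mu}^*),\qquad \kappa=\frac{L_V+R_{\max}L_\pi/(1-\gamma)^2}{\eta}<1 .
\]
This is contraction toward the fixed point only, which is all the convergence argument needs. If this route fails, the fallback is to assume Lipschitz best responses, $L_\pi$-style, and read contraction off Step 1.

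\textbf{Step 3 (perturbation and conclusion).} Step 1 gives
\[
d(\boldsymbol{\mu}^{(k+1)},\Phi(\boldsymbol{\mu}^{(k)}))\le \tfrac{c}{1-\alpha}\,\epsilon_k .
\]
Write $e_k=d(\boldsymbol{\mu}^{(k)},\boldsymbol{\mu}^*)$. Combining with Step 2 gives
\[
e_{k+1}\le \kappa\, e_k+\tfrac{c}{1-\alpha}\,\epsilon_k .
\]
Since $\epsilon_k\to0$ and $\kappa<1$, an elementary lemma (a discrete Grönwall / Toeplitz argument) gives $e_k\to 0$. For the policies, use the triangle inequality:
\[
\|\pi_k-\pi^*\|_\infty\le\epsilon_k+\|\pi_{\boldsymbol{\mu}^{(k)}}-\pi_{\boldsymbol{\mu}^*}\|_\infty .
\]
The second term vanishes by the $L_\pi$-continuity of best responses in $\mu$. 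None of the constants depends on $B$ except through $\alpha$, so the result holds for every $B$ satisfying the condition of Theorem~\ref{thm:n_approx}.
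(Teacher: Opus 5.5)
Your Steps~1 and~3 match the paper's argument. The paper proves the same forward bound $\max_t\Wass(\mu_t,\mu_t')\le L_F\|\pi-\pi'\|_\infty$ with $L_F=\frac{B|\cA|}{N(1-\alpha)}$, then runs the same perturbed recursion and the same triangle inequality for the policies.

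The gap is in Step~2. Assumption~\ref{asmp:monotonicity} only pairs a distribution with the value of \emph{its own} best response, $V_{\pi_\mu}(\cdot,\mu)$. The uniqueness computation in Theorem~\ref{thm:existence_uniqueness} never uses the fixed-point identity $\boldsymbol{\mu}^*=\Phi(\boldsymbol{\mu}^*)$. Repeating that computation therefore yields only inequalities of the form $\eta\,d(\boldsymbol{\nu},\boldsymbol{\mu}^*)\le \bigl(L_V+R_{\max}L_\pi/(1-\gamma)^2\bigr)\,d(\boldsymbol{\nu},\boldsymbol{\mu}^*)$ for a single trajectory $\boldsymbol{\nu}$. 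It never produces an inequality with $\Phi(\boldsymbol{\mu})$ on the left and $\boldsymbol{\mu}$ on the right. Run verbatim on an arbitrary $\boldsymbol{\nu}$, it would ``prove'' $\boldsymbol{\nu}=\boldsymbol{\mu}^*$, so it carries no information about the iteration.

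More fundamentally, monotonicity is not what makes best-response iteration contract. Crowd-averse games are the prototype of Lasry--Lions monotonicity. In such a game, with a sharp (large-$L_\pi$) softmax best response, everyone moves to the currently less crowded state and the Picard iterates oscillate. Monotonicity gives uniqueness, and convergence of averaged schemes such as fictitious play. It does not give $d(\Phi(\boldsymbol{\mu}),\boldsymbol{\mu}^*)\le\kappa\,d(\boldsymbol{\mu},\boldsymbol{\mu}^*)$; contraction has to come from the best response being insensitive to $\boldsymbol{\mu}$.

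The paper in fact uses your fallback. Composing the forward bound with the $L_\pi$-Lipschitz best response gives $d_{k+1}\le L_F(\epsilon_k+L_\pi d_k)$, i.e.\ modulus $L_\Gamma=\frac{B|\cA|L_\pi}{N(1-\alpha)}$. The paper obtains $L_\Gamma<1$ by requiring $N>B|\cA|L_\pi/(1-\alpha)$, which it calls the ``large-population regime''. This exploits the explicit factor $B/N$ in your constant $c$, so your closing remark that no constant depends on $B$ except through $\alpha$ is inaccurate.

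This smallness condition is an extra hypothesis, not implied by those of Theorems~\ref{thm:existence_uniqueness} and~\ref{thm:n_approx}; for $B=N$ it reads $|\cA|L_\pi<1-\rho_P-L_P$. Your suspicion that the naive product of constants need not be below one was therefore correct. The repair is to impose that condition explicitly, not to route around it through monotonicity.
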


The condition $\epsilon_k \to 0$ is achievable in practice: for a fixed trajectory $\{\hat{\mu}_t\}$, each agent faces a finite-horizon MDP, so running sufficiently many policy gradient steps per iteration drives the policy toward the best response. The outer loop converges because the rollout--best-response alternation is contracting under the monotonicity conditions of Theorem~\ref{thm:existence_uniqueness}, and the approximation errors $\epsilon_k$ are absorbed as they vanish. The proof is given in Appendix~\ref{app:proof_convergence}.

\section{Related Work}
\label{sec:related}

\paragraph{Mean-field MARL.} The mean-field approach to MARL was introduced by \citet{yang2018mean}, who used the mean action $\bar{a}$ as the mean field. \citet{subramanian2019reinforcement} subsequently used the empirical state-action distribution as the mean-field object, moving beyond $\bar{a}$, but still under synchronous dynamics. Extensions along the spatial dimension include multi-type MF-MARL \citep{subramanian2020multi}, which partitions agents by type; major-minor MF-MARL \citep{cui2024m3fc}, which distinguishes influential ``major'' agents from a homogeneous ``minor'' population; and graph-structured MFG approaches using graphon \citep{graphon2024} and graphex \citep{graphex2024} limits. All these works assume synchronous decision-making. Our framework opens the orthogonal temporal dimension by removing this assumption entirely.

\paragraph{Asynchronous MARL.} Several works address asynchronous execution, but all remain on top of synchronous game models. MacDec-POMDP \citep{macdec2020} adds temporally extended macro-actions to Dec-POMDP; ACE \citep{ace2024} applies delay randomization to MAPPO; SeqComm \citep{seqcomm2023} introduces priority-based communication ordering; STEER \citep{steer2024} decomposes within-step action selection into a Stackelberg hierarchy. These methods engineer around the consequences of asynchrony---variable durations, delays, sequential ordering---but the underlying state transition remains synchronous: all agents must act before the environment advances. Our framework removes this requirement at the model level, providing a mean-field formulation---dynamics, Bellman equation, and equilibrium---that is natively defined for any number of active agents per step, without an underlying synchronous substrate.

\paragraph{Mean-field game theory.} The population distribution $\mu_t$ is a central object in continuous-time MFG theory \citep{lasry2007mean, huang2006large, carmona2018probabilistic}, but in a fundamentally different setting: continuous time, infinite populations, and PDE-based analysis. Discrete-time MFG theory \citep{gomes2010discrete, saldi2018markov} and its learning algorithms---fictitious play \citep{perrin2020fictitious}, Q-iteration \citep{anahtarci2023learning}, policy gradient \citep{carmona2019linear}---all assume synchronous dynamics. Our work addresses a different regime---discrete time, finite populations, asynchronous decision-making---using model-free RL.

\paragraph{Approximation theory for MARL.} The approximation of finite-$N$ games by mean-field limits has been studied through propagation of chaos results \citep{carmona2018probabilistic, lacker2020convergence}. \citet{tractable2024} show that mean-field RL can be PPAD-hard without structural assumptions. Both lines of work assume synchronous dynamics. Our $O(1/\sqrt{N})$ bound (Theorem~\ref{thm:n_approx}) establishes the first finite-population approximation guarantee under asynchronous dynamics.

\section{Experiments}
\label{sec:experiments}

We validate the theoretical findings on two environments: a \emph{Sequential Resource Selection Game} (SRSG) for clean theoretical verification, and a \emph{Dynamic Queueing Game} (DQG) that demonstrates practical value in a multi-step asynchronous setting. All experiments use 40 independent seeds; error bars and $\pm$ denote one standard deviation.

\subsection{Sequential Resource Selection Game (SRSG)}
\label{subsec:srsg}

To validate robustness to batch size and $O(1/\sqrt{N})$ concentration, we design a congestion game in which agents arrive in batches, each choosing one of several resources whose reward decreases with congestion. The batch size $B$ can be varied from $1$ (fully sequential) to $N$ (fully synchronous).

We consider a setting where $N$ agents arrive in batches of size $B$ and irrevocably choose among $M = 5$ resources over $T = \lceil N/B \rceil$ steps. Agent~$i$'s reward upon choosing resource~$m$ is $r_m(\mu) = v_m - \alpha\, \mu[m]^2$, where $v = (0.5, 0.75, 1.0, 1.25, 1.5)$, $\alpha = 1.0$, and $\mu[m]$ is the fraction of agents already allocated to resource~$m$. The quadratic congestion penalizes overloaded resources, creating a tension between attractive high-value resources and the cost of congestion.

We evaluate all policies by \emph{per-agent welfare}---the average reward across agents---which reflects how efficiently the population spreads across resources: higher welfare means less congestion waste. For reference, welfare equals $0.5$ when all agents crowd onto the best resource ($v_5 - \alpha = 0.5$) and approaches $1.5$ in the uncongested limit. We train the policy using TMF-PG (Algorithm~\ref{alg:fb}) and compare against a \textbf{Myopic} policy that maximizes the instantaneous reward $r_m(\mu)$ without anticipating future arrivals. Existing mean-field RL methods require synchronous updates and do not apply here; Myopic ablates the forward planning that distinguishes TMF-PG, so the performance gap isolates the benefit of anticipating the population trajectory.

\paragraph{Experiment 1: Robustness across batch sizes (Figure~\ref{fig:b_sweep}).}
We fix $N = 100$ and vary the batch size $B \in \{1, 2, 5, 10, 25, 50, 100\}$, spanning from fully sequential ($B = 1$) to fully synchronous ($B = N$). TMF-PG achieves welfare between $1.109$ and $1.158$ at every $B$---consistently above $70\%$ of the congestion-free maximum---with less than $5\%$ variation across the entire range. The policy performs equally well whether agents decide one at a time or all at once. By contrast, Myopic welfare depends heavily on $B$: at $B = 1$, agents arrive one at a time and naturally spread across resources, yielding welfare $1.107$ comparable to TMF-PG; at $B = 100$, all agents choose simultaneously, each selects the highest-value resource, and welfare collapses to $0.500$.

\begin{figure}[t]
    \centering
    \includegraphics[width=0.80\columnwidth]{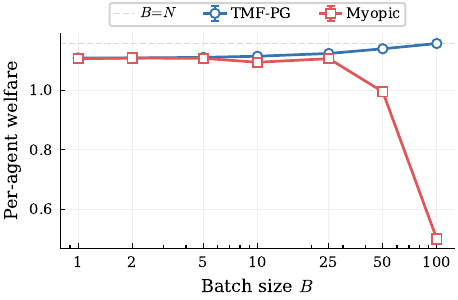}
    \caption{SRSG ($N = 100$): per-agent welfare vs.\ batch size $B$. TMF-PG achieves above $70\%$ of the congestion-free maximum at every $B$; Myopic matches at $B = 1$ but collapses to $0.5$ when all agents choose simultaneously.}
    \label{fig:b_sweep}
\end{figure}

\paragraph{Experiment 2: Finite-population accuracy (Figure~\ref{fig:concentration}).}
TMF-PG is trained on an infinite-population model; how accurate is this approximation for finite $N$? We fix $B = 1$ (fully sequential) and sweep $N \in \{10, 20, 50, 100, 200, 500, 1000, 2000\}$, measuring two indicators. The first is the \emph{cross-run welfare standard deviation}: we run $100$ simulations with the same policy and record how much per-agent welfare fluctuates---smaller fluctuation means the infinite-population prediction is more reliable. The second is the \emph{trajectory prediction error}: the $L_1$ distance between the deterministic population trajectory predicted by Equation~\ref{eq:mu_expected} and the realized $\mu_t$ from the $N$-agent simulation. Both shrink rapidly with $N$: welfare standard deviation drops from $0.016$ ($N = 10$) to $0.001$ ($N = 2000$), and prediction error drops from $0.019$ ($N = 50$) to $0.003$ ($N = 2000$), meaning the model tracks the true dynamics to within $0.3\%$. On log-log axes, both decay at least as fast as $O(1/\sqrt{N})$, consistent with the upper bound of Theorem~\ref{thm:n_approx}. Additional results in Appendix~\ref{app:experiments_detail} (Figure~\ref{fig:prediction_heatmap}) confirm that the prediction error remains stable across batch sizes.

\begin{figure}[t]
    \centering
    \begin{subfigure}[b]{0.48\columnwidth}
        \centering
        \includegraphics[width=\textwidth]{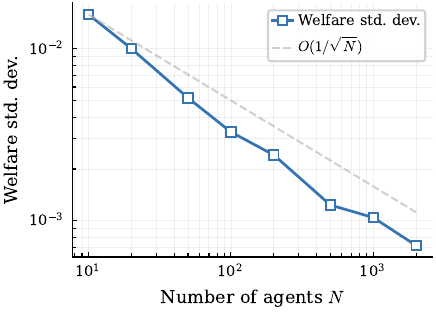}
        \caption{Welfare standard deviation}
        \label{fig:concentration_a}
    \end{subfigure}
    \hfill
    \begin{subfigure}[b]{0.48\columnwidth}
        \centering
        \includegraphics[width=\textwidth]{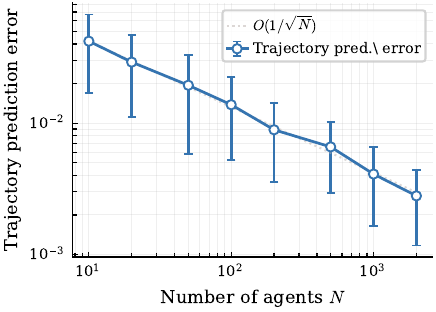}
        \caption{Trajectory prediction error}
        \label{fig:concentration_b}
    \end{subfigure}
    \caption{SRSG ($B = 1$): both welfare standard deviation \textbf{(a)} and trajectory prediction error \textbf{(b)} decay as $O(1/\sqrt{N})$ (dashed reference line), consistent with Theorem~\ref{thm:n_approx}.}
    \label{fig:concentration}
\end{figure}

Taken together, Experiments 1 and 2 validate the two central predictions of the TMF framework under the contraction condition: performance and approximation quality are both independent of the batch size $B$, and the finite-population error vanishes at rate $O(1/\sqrt{N})$.

\subsection{Dynamic Queueing Game (DQG)}
\label{subsec:dqg}

Many resource allocation problems involve repeated decisions over time: an agent who joins a queue today is unavailable for reassignment until service completes. To test whether TMF-PG can exploit such temporal structure, we design a DQG in which agents repeatedly choose servers, wait in queues, and return for reassignment, so that current allocation decisions shape future queue states.

We consider a DQG with $N$ agents choosing among $M = 5$ heterogeneous servers over a horizon of $H = 80$ steps. Each server has a quality, a service rate, and a capacity threshold; when the number of queued agents exceeds capacity, the service rate drops sharply (multiplicative cliff factor $\kappa = 0.2$). Upon completing service, an agent collects a quality-dependent reward and re-enters the free pool. Because only agents who have finished service choose at each step, the effective batch size $B$ varies with the queue dynamics---a natural instance of variable $B$ that the TMF framework handles without modification. The servers are designed with a ``honey-trap'': the highest-quality server (quality $= 3.0$, rate $= 3.0$) has the smallest capacity ($c = 3$), so agents drawn to its high reward risk triggering the cliff penalty once the queue fills.

We evaluate by \emph{per-agent reward}---total reward summed over all agents and all time steps, divided by $N$---the multi-step analogue of per-agent welfare in the SRSG. We compare TMF-PG against the same Myopic baseline, which greedily selects the server with the highest immediate expected reward given current loads. In this setting, Myopic systematically overloads the honey-trap server; TMF-PG anticipates the collective load trajectory and spreads allocation to avoid congestion collapse.

\paragraph{Experiment 3: Value of forward planning in DQG (Figure~\ref{fig:dqg}).}
We sweep $N \in \{50, 100, 150\}$ with capacities scaled linearly ($c_m \propto N/50$). TMF-PG achieves $10$--$30\%$ higher per-agent reward by distributing load away from the honey-trap server before its queue overflows: at $N = 50$, per-agent reward is $8.29$ (TMF-PG) vs.\ $6.38$ (Myopic); at $N = 100$, $4.59$ vs.\ $3.80$. The gap narrows to $+10\%$ at $N = 150$. With linearly scaled capacities, the cliff threshold grows in proportion to $N$ (e.g., the honey-trap server holds $c = 9$ agents instead of $3$), so each agent's marginal contribution to congestion shrinks as $O(1/N)$ and even Myopic rarely triggers the cliff. Sensitivity analysis (Appendix~\ref{app:experiments_detail}, Figure~\ref{fig:dqg_sensitivity}) confirms that the advantage is robust across cliff intensities ($\kappa \in [0.05, 0.5]$) and horizons ($H \in [30, 120]$).

\begin{figure}[t]
    \centering
    \includegraphics[width=0.80\columnwidth]{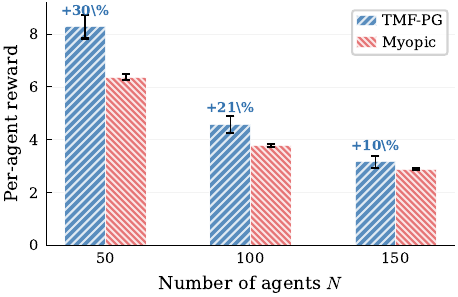}
    \caption{DQG: per-agent reward vs.\ $N$. TMF-PG avoids the cliff penalty on the honey-trap server by anticipating collective load, yielding $10$--$30\%$ higher reward than Myopic.}
    \label{fig:dqg}
\end{figure}

\section{Conclusion}
\label{sec:conclusion}

In this work, we identified a structural limitation of mean-field RL: the mean action, the standard summary statistic, is undefined whenever some agents do not act, making existing MF-RL methods inapplicable to asynchronous or sequential decision-making. To address this, we constructed the TMF framework from scratch around the population distribution $\mu$---the only mean-field object that remains well-defined, scalable, and sufficient regardless of which agents act. We formalized the TMF dynamic, Bellman equation, and equilibrium concept; proved existence and uniqueness of the equilibrium (Theorem~\ref{thm:existence_uniqueness}); established an $O(1/\sqrt{N})$ finite-population approximation bound that holds regardless of how many agents act per step (Theorem~\ref{thm:n_approx}); and proposed TMF-PG, a policy gradient algorithm that provably converges to the unique equilibrium (Theorem~\ref{thm:convergence}). Experiments on a resource selection game and a dynamic queueing game confirm that TMF-PG achieves near-identical performance whether one agent or all $N$ act per step, with approximation error decaying at the predicted $O(1/\sqrt{N})$ rate.

\paragraph{Relation to existing practice.} It is worth noting that several recent asynchronous MARL methods already rely, implicitly, on population-state information: ACE~\citep{ace2024} conditions on delayed joint observations, MacDec-POMDP~\citep{macdec2020} aggregates agent histories through centralized critics, and sequential decision frameworks such as STEER~\citep{steer2024} and SeqComm~\citep{seqcomm2023} condition on predecessors' actions or communication messages that implicitly encode distributional information. These mechanisms can be viewed as heuristic approximations to the population distribution. Our framework provides a principled explanation for \emph{why} such approaches work: the population distribution is a sufficient statistic for mean-field interaction under exchangeability, and the equilibrium it induces is robust to the batching protocol. In this sense, TMF offers theoretical grounding for a design pattern that practitioners have already adopted, while also identifying the precise conditions under which the approach is guaranteed to succeed.

\paragraph{Future directions.} Several extensions are natural. First, combining the population-measure approach (temporal dimension) with structured interaction models such as graphon or graphex limits (spatial dimension) would yield a fully general mean-field framework for networked systems. Second, extending the theory to partial observability---where agents estimate $\mu_t$ from local information rather than observing it directly---would broaden applicability. Third, relaxing exchangeability to accommodate heterogeneous agent types within the TMF dynamic would extend the framework to settings where agents differ in capabilities or objectives.

\bibliographystyle{icml2026}
\bibliography{main}

\newpage
\appendix
\onecolumn
\allowdisplaybreaks

\section{Algorithm}
\label{app:algorithm}

Algorithm~\ref{alg:fb} presents the full pseudocode of TMF-PG. Each iteration consists of a rollout (executing the current policy in the $N$-agent system to collect trajectories and observe $\{\hat{\mu}_t\}$) followed by a policy update (computing returns and updating the policy by policy gradient). In the convergence analysis, these two phases correspond to the forward map $\mathrm{Fwd}$ and the backward map $\mathrm{BR}$ of the operator $\Gamma$.

\begin{algorithm}[H]
\caption{TMF Policy Gradient (TMF-PG)}
\label{alg:fb}
\begin{algorithmic}[1]
\REQUIRE Initial policy parameters $\theta_0$, initial distribution $\mu_0$, batch size $B$, horizon $T$, number of iterations $K$, learning rate $\alpha$
\FOR{$k = 0, 1, \ldots, K-1$}
    \STATE \textbf{// Rollout: execute policy and collect trajectories}
    \STATE Initialize $o_0^i \sim \mu_0$ for all agents $i = 1, \ldots, N$
    \STATE $\hat{\mu}_0(o) \gets \frac{1}{N}\sum_{i=1}^N \mathbbm{1}[o_0^i = o]$
    \FOR{$t = 0, 1, \ldots, T-1$}
        \STATE Sample batch $\cB_t \subseteq \{1, \ldots, N\}$ with $|\cB_t| = B$
        \FOR{$i \in \cB_t$}
            \STATE Sample $a_t^i \sim \pi_{\theta_k}(\cdot | o_t^i, \hat{\mu}_t)$
            \STATE Observe reward $r_t^i$ and next observation $o_{t+1}^i \sim P(\cdot | o_t^i, a_t^i, \hat{\mu}_t)$
        \ENDFOR
        \FOR{$j \notin \cB_t$}
            \STATE Transition $o_{t+1}^j \sim P_0(\cdot | o_t^j, \hat{\mu}_t)$; set $r_t^j \gets 0$
        \ENDFOR
        \STATE $\hat{\mu}_{t+1}(o) \gets \frac{1}{N}\sum_{i=1}^N \mathbbm{1}[o_{t+1}^i = o]$
    \ENDFOR
    \STATE \textbf{// Policy update: compute returns and update policy}
    \STATE Compute returns $G_t^i = \sum_{s \geq t} \gamma^{s-t} r_s^i$ for all agents $i$ and stages $t$
    \STATE Compute advantage estimates $\hat{A}_t^i$ (e.g., via baseline $V_\psi(o, \hat{\mu}_t)$)
    \STATE Update: $\theta_{k+1} \gets \theta_k + \alpha \sum_{t,i} \nabla_{\theta} \log \pi_{\theta_k}(a_t^i | o_t^i, \hat{\mu}_t)\, \hat{A}_t^i$
\ENDFOR
\STATE \textbf{return} $\pi_{\theta_K}$
\end{algorithmic}
\end{algorithm}

\paragraph{Computational complexity.} The rollout simulates $N$ agents for $T$ stages, requiring $O(NT)$ transitions per iteration ($BT$ active transitions via $P$ and $(N-B)T$ passive transitions via $P_0$). The policy update computes returns and policy gradients in $O(NT)$ time. The per-agent complexity is independent of $N$ because the mean-field approximation decouples the agents: each agent's update depends on its own trajectory and the observed population distribution $\hat{\mu}_t$, not on the states of all other agents individually.

\section{Value Function Sensitivity}
\label{app:value_sensitivity}

The following Lipschitz bound on the value function with respect to the population trajectory is used in the proofs of both Theorem~\ref{thm:existence_uniqueness} and Theorem~\ref{thm:n_approx}.

\begin{lemma}[Value sensitivity]
\label{lem:value_sensitivity}
Under Assumption~\ref{asmp:lipschitz}, for any fixed policy $\pi$ and two measure trajectories $\{\mu_t\}$, $\{\mu_t'\}$ with $\max_t \Wass(\mu_t, \mu_t') = d$:
\begin{equation}
    \|V_{\pi}(\cdot, \mu) - V_{\pi}(\cdot, \mu')\|_\infty \leq L_V\, d, \quad \text{where } L_V = \frac{L_r + \gamma L_P R_{\max}/(1-\gamma)}{1-\gamma}.
    \label{eq:value_sensitivity}
\end{equation}
\end{lemma}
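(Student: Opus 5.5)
We prove the bound by a standard Bellman-recursion (contraction) argument, comparing the two value functions along the two fixed measure trajectories. We read $V_\pi(\cdot,\mu)$ as the value of the time-indexed Bellman recursion~\eqref{eq:v_pi} along the prescribed trajectory $\{\mu_t\}$. Write $V_t(o) = V_\pi(o,\mu_t)$ and $V'_t(o) = V_\pi(o,\mu'_t)$, and set $\Delta_t = \|V_t - V'_t\|_\infty$ and $\Delta = \sup_t \Delta_t$. We also use $\|V_t\|_\infty \le R_{\max}/(1-\gamma)$, which follows from $|r|\le R_{\max}$ by summing the geometric series. The policy is fixed, so the action weights $\pi(a|o,\cdot)$ are the only other place $\mu$ enters. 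We take the lemma to be about the policy's dependence on $o$ only, or equivalently the same weights under both trajectories, so that no $L_\pi$ term appears; this matches the stated constant $L_V$, which contains no $L_\pi$.

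First, subtract the two Bellman equations at time $t$ and add and subtract $\gamma\sum_{o'}P(o'|o,a,\mu'_t)V_{t+1}(o')$. This splits the difference into three terms:
\begin{align*}
|V_t(o)-V'_t(o)| \le \sum_a \pi(a|o)\Bigl[&|r(o,a,\mu_t)-r(o,a,\mu'_t)| \\
&+ \gamma\Bigl|\sum_{o'}\bigl(P(o'|o,a,\mu_t)-P(o'|o,a,\mu'_t)\bigr)V_{t+1}(o')\Bigr| \\
&+ \gamma\sum_{o'}P(o'|o,a,\mu'_t)\,|V_{t+1}(o')-V'_{t+1}(o')|\Bigr].
\end{align*}
We bound each term in turn.

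The first term is at most $L_r d$ by Assumption~\ref{asmp:lipschitz}. For the second term, Hölder's inequality bounds $|\sum_{o'}(P-P')V_{t+1}|$ by $\Wass(P,P')\,\|V_{t+1}\|_\infty \le L_P d\, R_{\max}/(1-\gamma)$. Since $\Wass$ here is the $\ell_1$ distance, this duality is just $\ell_1$--$\ell_\infty$ Hölder, so no factor $1/2$ or centering is needed. The third term is at most $\gamma\Delta_{t+1}$, because $P(\cdot|o,a,\mu'_t)$ is a probability vector. Combining the three gives
\[
\Delta_t \le L_r d + \gamma L_P R_{\max} d/(1-\gamma) + \gamma\Delta_{t+1}.
\]

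Finally, take the supremum over $t$ to get $\Delta \le (L_r+\gamma L_P R_{\max}/(1-\gamma))d + \gamma\Delta$. Rearranging, using $\gamma<1$ and the fact that $\Delta \le 2R_{\max}/(1-\gamma)$ is finite, yields $\Delta \le L_V d$. For a finite horizon $T$ we could instead unroll the recursion backward from the terminal condition $V_T = V'_T = 0$; summing the geometric series gives the same constant.

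The main obstacle is interpretive rather than technical. We must fix what $V_\pi(\cdot,\mu)$ means: a value along a fixed exogenous trajectory, rather than one where $\mu_{t+1}$ is regenerated through~\eqref{eq:mu_expected}. We must also make sure the policy's possible $\mu$-dependence does not contribute. If $\pi$ genuinely depends on $\mu$, an extra term $|\cA| L_\pi d\,R_{\max}/(1-\gamma)$ per step would appear. We would handle that by stating the lemma for policies evaluated identically under both trajectories, as the constant indicates.
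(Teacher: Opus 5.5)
Your proposal is correct and follows essentially the same argument as the paper: the same three-term Bellman decomposition (a reward difference, a transition difference bounded by $\ell_1$--$\ell_\infty$ H\"older with $\|V_\pi\|_\infty \le R_{\max}/(1-\gamma)$, and a $\gamma$-weighted recursive term), closed by solving $\Delta \le \bigl(L_r + \gamma L_P R_{\max}/(1-\gamma)\bigr)d + \gamma\Delta$. The differences are cosmetic: you insert $P(\cdot|o,a,\mu_t')\,V_\pi(\cdot,\mu_{t+1})$ where the paper inserts $P(\cdot|o,a,\mu_t)\,V_\pi(\cdot,\mu_{t+1}')$, you close the recursion by taking the supremum over $t$ and using finiteness of $\Delta$ (which makes explicit what the paper's ``unrolling'' tacitly relies on), and your treatment of $\pi$ as $\mu$-independent matches the paper's silent use of $\pi(a|o)$.
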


\begin{proof}
For a fixed policy $\pi$, the Bellman equation gives:
\[
    V_\pi(o, \mu_t) = \sum_a \pi(a|o)\Bigl[r(o, a, \mu_t) + \gamma \sum_{o'} P(o'|o,a,\mu_t)\, V_\pi(o', \mu_{t+1})\Bigr].
\]
Taking the difference $V_\pi(o, \mu_t) - V_\pi(o, \mu_t')$ and adding and subtracting $P(o'|o,a,\mu_t)\, V_\pi(o', \mu_{t+1}')$ inside the sum:
\begin{align}
    V_\pi(o, \mu_t) - V_\pi(o, \mu_t') &= \sum_a \pi(a|o)\Bigl[\underbrace{r(o,a,\mu_t) - r(o,a,\mu_t')}_{\text{(i) reward difference}} \notag\\
    &\quad + \gamma \sum_{o'} \underbrace{\bigl[P(o'|o,a,\mu_t) - P(o'|o,a,\mu_t')\bigr]\, V_\pi(o', \mu_{t+1}')}_{\text{(ii) transition difference}} \notag\\
    &\quad + \gamma \sum_{o'} P(o'|o,a,\mu_t)\, \underbrace{\bigl[V_\pi(o', \mu_{t+1}) - V_\pi(o', \mu_{t+1}')\bigr]}_{\text{(iii) recursive term}} \Bigr]. \notag
\end{align}
Taking absolute values and bounding each term:

\emph{Term~(i): reward difference.} By Assumption~\ref{asmp:lipschitz}:
\[
    |r(o,a,\mu_t) - r(o,a,\mu_t')| \leq L_r\, \Wass(\mu_t, \mu_t') \leq L_r\, d.
\]

\emph{Term~(ii): transition difference.} We first apply H\"older's inequality ($\sum_o |a_o\, b_o| \leq \|a\|_1\, \|b\|_\infty$) to factor out $V$:
\[
    \Bigl|\sum_{o'} \bigl[P(o'|o,a,\mu_t) - P(o'|o,a,\mu_t')\bigr] V_\pi(o', \mu_{t+1}')\Bigr|
    \leq \|V_\pi\|_\infty \sum_{o'} \bigl|P(o'|o,a,\mu_t) - P(o'|o,a,\mu_t')\bigr|.
\]
On a discrete state space, $\sum_{o'} |P(o') - P'(o')|$ is the $L_1$ distance between two distributions, which equals $\Wass\bigl(P(\cdot|o,a,\mu_t),\, P(\cdot|o,a,\mu_t')\bigr)$. By Assumption~\ref{asmp:lipschitz} ($L_P$-Lipschitz):
\[
    \Wass\bigl(P(\cdot|o,a,\mu_t),\, P(\cdot|o,a,\mu_t')\bigr) \leq L_P\, \Wass(\mu_t, \mu_t') \leq L_P\, d.
\]
So term~(ii) is bounded by $\gamma\, \|V_\pi\|_\infty\, L_P\, d$.

\emph{Term~(iii): recursive term.} Denoting $\delta_t = \|V_\pi(\cdot, \mu_t) - V_\pi(\cdot, \mu_t')\|_\infty$, this contributes $\gamma\, \delta_{t+1}$.

\emph{Combining.} Summing the three contributions:
\[
    \delta_t \leq L_r\, d + \gamma\, \|V_\pi\|_\infty\, L_P\, d + \gamma\, \delta_{t+1}.
\]
Since $\|V_\pi\|_\infty \leq R_{\max}/(1-\gamma)$ by the geometric series of discounted rewards (where $R_{\max} = \max_{o,a,\mu} |r(o,a,\mu)|$), substituting and unrolling the recursion with $\gamma < 1$:
\[
    \delta_t \leq \bigl(L_r + \gamma L_P R_{\max}/(1-\gamma)\bigr)\, d + \gamma\, \delta_{t+1}
    \leq \frac{L_r + \gamma L_P R_{\max}/(1-\gamma)}{1-\gamma}\, d = L_V\, d. \qedhere
\]
\end{proof}

\section{Proof of Theorem~\ref{thm:existence_uniqueness}}
\label{app:proof_existence}

\noindent\textbf{Theorem~\ref{thm:existence_uniqueness}} (Existence and uniqueness).\quad
\emph{Under Assumptions~\ref{asmp:lipschitz}--\ref{asmp:monotonicity} and the contraction condition (Equation~\ref{eq:contraction_condition_main}), for any batch size $B \in \{1, \ldots, N\}$, a TMF equilibrium $(\pi^*, \{\mu_t^*\})$ exists and is unique.}

\noindent\textit{Proof.}\quad We first define the forward-backward operator underlying the proof, then establish existence (via Brouwer) and uniqueness (via contraction).

\subsection{The Forward-Backward Operator} Define $\Gamma: \Delta(\cO)^T \to \Delta(\cO)^T$ as the operator that maps a distribution trajectory $\{\mu_t\}_{t=0}^T$ to a new trajectory $\Gamma(\{\mu_t\}) = \{\mu_t'\}$ via:
\begin{align}
    \pi_\mu(\cdot|o) &= \argmax_\pi \sum_a \pi(a|o)\bigl[r(o,a,\mu) + \gamma \sum_{o'} P(o'|o,a,\mu)\, V_\mu(o')\bigr], \label{eq:phi_backward}\\
    \mu_{t+1}'(o') &= \sum_o \mu_t'(o)\Bigl[\tfrac{B}{N}\sum_a \pi_\mu(a|o)\, P(o'|o,a,\mu_t) + \bigl(1-\tfrac{B}{N}\bigr) P_0(o'|o,\mu_t)\Bigr], \quad \mu_0' = \mu_0, \label{eq:phi_forward}
\end{align}
where $V_\mu$ is the value function obtained by solving the Bellman equation (Equation~\ref{eq:v_pi}) given $\{\mu_t\}$, and $\pi_\mu$ is the corresponding optimal policy. By Assumption~\ref{asmp:lipschitz}, $\pi_\mu$ is $L_\pi$-Lipschitz in $\mu$. A fixed point of $\Gamma$ corresponds to a TMF equilibrium $(\pi^*, \{\mu_t^*\})$.

The operator $\Gamma$ decomposes as $\Gamma = \mathrm{Fwd} \circ \mathrm{BR}$, where $\mathrm{BR}\colon \boldsymbol{\mu} \mapsto \pi_{\boldsymbol{\mu}}$ is the best-response map (Equation~\ref{eq:phi_backward}) and $\mathrm{Fwd}\colon \pi \mapsto \boldsymbol{\mu}'$ evolves the population distribution under $\pi$ via the TMF dynamic (Equation~\ref{eq:phi_forward}). We refer to these as the \emph{backward pass} and \emph{forward pass} throughout the proofs.

\subsection{Existence via Brouwer's Fixed-Point Theorem}
The operator $\Gamma$ satisfies the conditions of Brouwer's fixed-point theorem:
\begin{enumerate}
    \item \textbf{Compact convex domain.} $\Delta(\cO)^T$ is a compact convex subset of $\RR^{|\cO| \times T}$ (since $\cO$ is finite).
    \item \textbf{Self-mapping.} The TMF dynamic (Equation~\ref{eq:mu_expected}) preserves the probability simplex at each step, so $\Gamma$ maps $\Delta(\cO)^T$ into itself.
    \item \textbf{Continuity.} $V_\mu$ is continuous in $\mu$ by Assumption~\ref{asmp:lipschitz} (Lipschitz continuity of $r$ and $P$); $\pi_\mu$ is continuous in $\mu$ by Assumption~\ref{asmp:lipschitz} ($L_\pi$-Lipschitz); $\mathrm{Fwd}$ is continuous by the Lipschitz property of $P$ and $P_0$.
\end{enumerate}
Therefore $\Gamma$ has at least one fixed point---a TMF equilibrium exists.

\subsection{Uniqueness via Monotonicity}

Suppose for contradiction that two distinct TMF equilibria exist: $(\pi, \{\mu_t\})$ and $(\pi', \{\mu_t'\})$ with $d = \max_t \Wass(\mu_t, \mu_t') > 0$. Both are fixed points of $\Gamma$ (Equations~\ref{eq:phi_backward}--\ref{eq:phi_forward}), so at each equilibrium the policy is optimal given the trajectory, and the trajectory is self-consistent under the policy.

\emph{Step 1: Monotonicity lower bound.} Let $\bar{t} = \arg\max_t \Wass(\mu_t, \mu_t')$, so that $\Wass(\mu_{\bar{t}}, \mu_{\bar{t}}') = d$. By Assumption~\ref{asmp:monotonicity}:
\begin{equation}
    \eta\, d^2 \leq \sum_o \bigl(\mu_{\bar{t}}(o) - \mu_{\bar{t}}'(o)\bigr)\bigl[V_{\pi}(o, \mu_{\bar{t}}) - V_{\pi'}(o, \mu_{\bar{t}}')\bigr].
    \label{eq:mono_lower}
\end{equation}

\emph{Step 2: H\"older upper bound.} By H\"older's inequality ($\sum_o |a_o\, b_o| \leq \|a\|_1\, \|b\|_\infty$), the right-hand side of Equation~\ref{eq:mono_lower} satisfies:
\[
    \text{RHS of Equation~\ref{eq:mono_lower}} \leq d \cdot \|V_{\pi}(\cdot, \mu_{\bar{t}}) - V_{\pi'}(\cdot, \mu_{\bar{t}}')\|_\infty.
\]
Dividing by $d > 0$:
\begin{equation}
    \eta\, d \leq \|V_{\pi}(\cdot, \mu_{\bar{t}}) - V_{\pi'}(\cdot, \mu_{\bar{t}}')\|_\infty.
    \label{eq:mono_holder}
\end{equation}

\emph{Step 3: Value sensitivity upper bound.} We bound the value difference by decomposing it via triangle inequality. Adding and subtracting $V_{\pi}(\cdot, \mu_{\bar{t}}')$:
\begin{equation}
    \|V_{\pi}(\cdot, \mu_{\bar{t}}) - V_{\pi'}(\cdot, \mu_{\bar{t}}')\|_\infty \leq \underbrace{\|V_{\pi}(\cdot, \mu_{\bar{t}}) - V_{\pi}(\cdot, \mu_{\bar{t}}')\|_\infty}_{\text{(I) trajectory sensitivity}} + \underbrace{\|V_{\pi}(\cdot, \mu_{\bar{t}}') - V_{\pi'}(\cdot, \mu_{\bar{t}}')\|_\infty}_{\text{(II) policy sensitivity}}.
    \label{eq:triangle_decomp}
\end{equation}

\emph{Bounding term~(I).} This is the value sensitivity under the same policy $\pi$ but different trajectories. By Lemma~\ref{lem:value_sensitivity}:
\[
    \|V_{\pi}(\cdot, \mu_{\bar{t}}) - V_{\pi}(\cdot, \mu_{\bar{t}}')\|_\infty \leq L_V\, d.
\]

\emph{Bounding term~(II).} This is the value difference under the same trajectory $\{\mu_t'\}$ but different policies. Since $\pi'$ is optimal for $\{\mu_t'\}$, we have $V_{\pi'}(o, \mu') \geq V_{\pi}(o, \mu')$ for all $o$. We now bound this gap.

For a fixed trajectory $\{\mu_t'\}$, write the Bellman equations for both policies:
\begin{align*}
    V_{\pi'}(o) &= \sum_a \pi'(a|o)\Bigl[r(o,a,\mu') + \gamma \sum_{o'} P(o'|o,a,\mu')\, V_{\pi'}(o')\Bigr], \\
    V_{\pi}(o)  &= \sum_a \pi(a|o)\Bigl[r(o,a,\mu') + \gamma \sum_{o'} P(o'|o,a,\mu')\, V_{\pi}(o')\Bigr].
\end{align*}
Subtracting, and adding and subtracting $\sum_a \pi(a|o)[\cdots V_{\pi'}(o')]$ to separate the policy difference from the value difference:
\begin{align}
    V_{\pi'}(o) - V_{\pi}(o)
    &= \sum_a \bigl(\pi'(a|o) - \pi(a|o)\bigr)\Bigl[r(o,a,\mu') + \gamma \sum_{o'} P(o'|o,a,\mu')\, V_{\pi'}(o')\Bigr] \notag\\
    &\quad + \gamma \sum_a \pi(a|o) \sum_{o'} P(o'|o,a,\mu')\bigl[V_{\pi'}(o') - V_{\pi}(o')\bigr].
    \label{eq:bellman_diff}
\end{align}

\emph{First term (policy difference).} The bracketed expression satisfies:
\[
    \bigl|r(o,a,\mu') + \gamma \sum_{o'} P(o'|o,a,\mu')\, V_{\pi'}(o')\bigr| \leq R_{\max} + \gamma\, \|V_{\pi'}\|_\infty \leq R_{\max} + \frac{\gamma R_{\max}}{1-\gamma} = \frac{R_{\max}}{1-\gamma}.
\]
So the first term is bounded by $\|\pi' - \pi\|_\infty \cdot R_{\max}/(1-\gamma)$.

\emph{Second term (recursive).} Let $\Delta = \|V_{\pi'} - V_{\pi}\|_\infty$. Since $|V_{\pi'}(o') - V_{\pi}(o')| \leq \Delta$ for all $o'$, and $\pi(\cdot|o)$ and $P(\cdot|o,a,\mu')$ are probability distributions:
\[
    \gamma \Bigl|\sum_a \pi(a|o) \sum_{o'} P(o'|o,a,\mu')\bigl[V_{\pi'}(o') - V_{\pi}(o')\bigr]\Bigr|
    \leq \gamma \underbrace{\sum_a \pi(a|o) \sum_{o'} P(o'|o,a,\mu')}_{= 1} \cdot \Delta = \gamma\, \Delta.
\]

\emph{Combining.} Substituting these bounds into Equation~\ref{eq:bellman_diff} and taking $\sup_o$ (the LHS becomes $\sup_o |V_{\pi'}(o) - V_{\pi}(o)| = \Delta$):
\[
    \Delta \leq \|\pi' - \pi\|_\infty \cdot \frac{R_{\max}}{1-\gamma} + \gamma\, \Delta
    \quad\Longrightarrow\quad
    (1-\gamma)\,\Delta \leq \|\pi' - \pi\|_\infty \cdot \frac{R_{\max}}{1-\gamma}
    \quad\Longrightarrow\quad
    \Delta \leq \frac{R_{\max}}{(1-\gamma)^2}\,\|\pi' - \pi\|_\infty.
\]
By Assumption~\ref{asmp:lipschitz}, $\|\pi - \pi'\|_\infty \leq L_\pi\, d$. So term~(II) of Equation~\ref{eq:triangle_decomp} satisfies:
\[
    \text{(II)} = \Delta \leq \frac{R_{\max}}{(1-\gamma)^2}\, L_\pi\, d = \frac{R_{\max}\, L_\pi}{(1-\gamma)^2}\, d.
\]

\emph{Step 4: Contradiction.} Substituting both bounds into Equation~\ref{eq:mono_holder}:
\[
    \eta\, d \leq L_V\, d + \frac{R_{\max}\, L_\pi}{(1-\gamma)^2}\, d.
\]
Dividing by $d > 0$:
\[
    \eta \leq L_V + \frac{R_{\max}\, L_\pi}{(1-\gamma)^2}.
\]
This contradicts the condition in Equation~\ref{eq:contraction_condition_main}. Therefore $d = 0$, and the TMF equilibrium is unique. $\hfill\square$
\section{Proof of Theorem~\ref{thm:n_approx}}
\label{app:proof_napprox}

\noindent\textbf{Theorem~\ref{thm:n_approx}} ($N$-agent approximation).\quad
\emph{Under Assumptions~\ref{asmp:lipschitz}--\ref{asmp:monotonicity}, let $(\pi^*, \{\mu_t^*\})$ be the TMF equilibrium. For any batch size $B$ such that $\alpha = \tfrac{B}{N}(\rho_P + L_P) + (1 - \tfrac{B}{N})(\rho_{P_0} + L_{P_0}) < 1$, the exploitability}
\[
    \epsilon_N = \max_i \sup_{\pi^i} \bigl[V_{\pi^i}(o_0^i, \hat{\mu}^N_{\pi^i, \pi^{*,-i}}) - V_{\pi^*}(o_0^i, \hat{\mu}^N_{\pi^*})\bigr]
    \leq \frac{C}{\sqrt{N}},
\]
\emph{where $C$ depends on $L_r$, $L_P$, $L_{P_0}$, $\rho_P$, $\rho_{P_0}$, $|\cO|$, and $T$, but not on $B$.}

\noindent\textit{Proof.}\quad The proof follows the propagation-of-chaos framework (\citealp{carmona2018probabilistic}). Unlike the standard synchronous setting, where all $N$ agents update simultaneously in $T$ steps, asynchronous batching produces $\tau = NT/B$ stages, each updating $B$ agents while the remaining $N-B$ transition passively via $P_0$. We compare two population trajectories:
\begin{itemize}
    \item $\mu_t^*$: the \emph{planned} trajectory, computed deterministically by the forward model (Equation~\ref{eq:mu_expected}) under the equilibrium policy $\pi^*$;
    \item $\hat{\mu}_t^N$: the \emph{realized} trajectory, the stochastic empirical distribution when $N$ agents actually execute $\pi^*$, i.e.\ $\hat{\mu}_t^N(o) = \frac{1}{N}\sum_{i=1}^N \mathbbm{1}[o_t^i = o]$.
\end{itemize}
The proof proceeds in five steps: reduce exploitability to the gap $\Wass(\hat{\mu}_t^N, \mu_t^*)$ (Step~0), bound the initial gap (Step~1), analyze per-step error propagation (Step~2), accumulate errors over all time steps (Step~3), and convert to an exploitability bound (Step~4).

\paragraph{Step 0: From exploitability to mean-field gap.} We show that bounding $\epsilon_N$ reduces to bounding $\Wass(\hat{\mu}_t^N, \mu_t^*)$. Consider any unilateral deviation where agent $i$ switches from $\pi^*$ to some alternative $\pi^i$, and write $\hat{\mu}^{N,\mathrm{dev}}$ for the empirical trajectory that results (it differs from $\hat{\mu}^N$ only in agent $i$'s states). We decompose the deviation gain by inserting the planned trajectory $\mu^*$ as a reference:
\begin{align}
    V_{\pi^i}(o, \hat{\mu}^{N,\mathrm{dev}}) - V_{\pi^*}(o, \hat{\mu}^N)
    &= \underbrace{\bigl[V_{\pi^i}(o, \hat{\mu}^{N,\mathrm{dev}}) - V_{\pi^i}(o, \mu^*)\bigr]}_{\text{(a)}}
     + \underbrace{\bigl[V_{\pi^i}(o, \mu^*) - V_{\pi^*}(o, \mu^*)\bigr]}_{\text{(b)}}
     + \underbrace{\bigl[V_{\pi^*}(o, \mu^*) - V_{\pi^*}(o, \hat{\mu}^N)\bigr]}_{\text{(c)}}.
\end{align}
We bound each term separately.

\emph{Term~(a).} Agent $i$'s deviation changes exactly one agent's state out of $N$, so the empirical distribution shifts by at most $1/N$ in Wasserstein distance at each step:
\[
    \max_t \Wass(\hat{\mu}_t^{N,\mathrm{dev}}, \mu_t^*) \leq \max_t \Wass(\hat{\mu}_t^N, \mu_t^*) + O(1/N).
\]
Applying Lemma~\ref{lem:value_sensitivity} (Equation~\ref{eq:value_sensitivity}):
\[
    |V_{\pi^i}(o, \hat{\mu}^{N,\mathrm{dev}}) - V_{\pi^i}(o, \mu^*)| \leq L_V \cdot \bigl(\max_t \Wass(\hat{\mu}_t^N, \mu_t^*) + O(1/N)\bigr).
\]

\emph{Term~(b).} Since $\pi^*$ is the equilibrium policy for the mean-field game under trajectory $\mu^*$, no deviation can improve the value:
\[
    V_{\pi^i}(o, \mu^*) - V_{\pi^*}(o, \mu^*) \leq 0.
\]

\emph{Term~(c).} Applying Lemma~\ref{lem:value_sensitivity} again:
\[
    |V_{\pi^*}(o, \mu^*) - V_{\pi^*}(o, \hat{\mu}^N)| \leq L_V \cdot \max_t \Wass(\hat{\mu}_t^N, \mu_t^*).
\]

\emph{Combining (a)--(c):}
\begin{equation}
    \epsilon_N \leq 2L_V \cdot \EE\bigl[\max_t \Wass(\hat{\mu}_t^N, \mu_t^*)\bigr] + O(L_V/N),
    \label{eq:exploit_reduction}
\end{equation}
so bounding $\EE[\max_t \Wass(\hat{\mu}_t^N, \mu_t^*)]$ directly yields a bound on $\epsilon_N$.

\paragraph{Step 1: Initial error.} At $t=0$, each agent's state is drawn i.i.d.\ from $\mu_0$, so $\hat{\mu}_0^N$ is the empirical distribution of $N$ independent samples. For each state $o$, $\hat{\mu}_0^N(o)$ is a sample mean with variance $\mu_0(o)(1-\mu_0(o))/N \leq 1/(4N)$ (since $p(1-p) \leq 1/4$ for all $p \in [0,1]$). By Jensen's inequality ($\EE[|X|] \leq \sqrt{\EE[X^2]}$):
\[
    \EE\bigl[|\hat{\mu}_0^N(o) - \mu_0(o)|\bigr] \leq \sqrt{\mathrm{Var}[\hat{\mu}_0^N(o)]} \leq \frac{1}{2\sqrt{N}}.
\]
Summing over all $o \in \cO$:
\begin{equation}
    \EE[\Wass(\hat{\mu}_0^N, \mu_0)] = \EE\Bigl[\sum_o |\hat{\mu}_0^N(o) - \mu_0(o)|\Bigr] \leq \frac{|\cO|}{2\sqrt{N}}.
    \label{eq:initial_error}
\end{equation}

\paragraph{Step 2: Per-step error propagation.} We now bound how the gap $\Wass(\hat{\mu}_t^N, \mu_t^*)$ evolves from step $t$ to $t+1$. The \emph{planned} trajectory updates deterministically via the forward model:
\[
    \mu_{t+1}^*(o') = \sum_o \mu_t^*(o) \Bigl[\tfrac{B}{N}\sum_a \pi^*(a|o)\, P(o'|o,a,\mu_t^*) + \bigl(1-\tfrac{B}{N}\bigr) P_0(o'|o,\mu_t^*)\Bigr].
\]
In the $N$-agent system, the \emph{realized} trajectory instead evolves stochastically:
\[
    \hat{\mu}_{t+1}^N(o') = \frac{1}{N}\sum_{i=1}^N \mathbbm{1}[o_{t+1}^i = o'],
\]
where each agent $i$ transitions using the \emph{realized} distribution $\hat{\mu}_t^N$ in place of the planned $\mu_t^*$: active agents transition via $P(\cdot|o_t^i, a_t^i, \hat{\mu}_t^N)$ and passive agents via $P_0(\cdot|o_t^i, \hat{\mu}_t^N)$.

Comparing these two updates, we decompose the gap at step $t+1$ via the triangle inequality, inserting the conditional mean $\EE[\hat{\mu}_{t+1}^N \mid \hat{\mu}_t^N]$ as a reference:
\begin{align}
    \EE[\Wass(\hat{\mu}_{t+1}^N, \mu_{t+1}^*) \mid \hat{\mu}_t^N]
    &\leq \underbrace{\EE[\Wass(\hat{\mu}_{t+1}^N, \EE[\hat{\mu}_{t+1}^N \mid \hat{\mu}_t^N]) \mid \hat{\mu}_t^N]}_{\text{Noise}} + \underbrace{\Wass(\EE[\hat{\mu}_{t+1}^N \mid \hat{\mu}_t^N], \mu_{t+1}^*)}_{\text{Bias}}.
    \label{eq:bias_noise_decomp}
\end{align}
We now bound each term separately.

\emph{Bias (second term).} The conditional mean $\EE[\hat{\mu}_{t+1}^N \mid \hat{\mu}_t^N]$ and the planned trajectory $\mu_{t+1}^*$ differ for two reasons: the transition parameters are evaluated at $\hat{\mu}_t^N$ instead of $\mu_t^*$, and the starting distributions $\hat{\mu}_t^N \neq \mu_t^*$ are different. For the parameter perturbation, the Lipschitz condition on $P$ (Assumption~\ref{asmp:lipschitz}) gives, for each active agent:
\[
    \Wass\bigl(P(\cdot|o, a, \hat{\mu}_t^N),\; P(\cdot|o, a, \mu_t^*)\bigr) \leq L_P\, \Wass(\hat{\mu}_t^N, \mu_t^*).
\]
Similarly, for each passive agent $i$:
\[
    \Wass\bigl(P_0(\cdot|o, \hat{\mu}_t^N),\; P_0(\cdot|o, \mu_t^*)\bigr) \leq L_{P_0}\, \Wass(\hat{\mu}_t^N, \mu_t^*).
\]
For the starting distribution difference, the Dobrushin contraction coefficients $\rho_P$, $\rho_{P_0}$ (Assumption~\ref{asmp:ergodicity}) bound the propagation of the gap $\hat{\mu}_t^N \neq \mu_t^*$. To separate the two contributions via the triangle inequality, define
\[
    \bar{\mu}_{t+1}(o') = \sum_o \hat{\mu}_t^N(o)\,\Bigl[\tfrac{B}{N}\textstyle\sum_a \pi^*(a|o)\,P(o'|o,a,\mu_t^*) + \tfrac{N-B}{N}\,P_0(o'|o,\mu_t^*)\Bigr],
\]
which applies the planned transition parameters ($\mu_t^*$) to the realized starting distribution ($\hat{\mu}_t^N$). By the triangle inequality:
\begin{align}
    \Wass\bigl(\EE[\hat{\mu}_{t+1}^N \mid \hat{\mu}_t^N],\; \mu_{t+1}^*\bigr)
    &\leq \Wass(\bar{\mu}_{t+1},\; \mu_{t+1}^*) + \Wass\bigl(\EE[\hat{\mu}_{t+1}^N \mid \hat{\mu}_t^N],\; \bar{\mu}_{t+1}\bigr).
    \label{eq:bias_triangle}
\end{align}
The first term compares two distributions obtained by applying the \emph{same} transition (with parameters $\mu_t^*$) to different starting distributions $\hat{\mu}_t^N$ and $\mu_t^*$. By the Dobrushin contraction property (Assumption~\ref{asmp:ergodicity}):
\[
    \Wass(\bar{\mu}_{t+1},\; \mu_{t+1}^*) \leq \bigl(\tfrac{B}{N}\,\rho_P + \tfrac{N-B}{N}\,\rho_{P_0}\bigr)\, \Wass(\hat{\mu}_t^N, \mu_t^*).
\]
The second term compares two distributions with the \emph{same} starting distribution $\hat{\mu}_t^N$ but different transition parameters ($\hat{\mu}_t^N$ vs.\ $\mu_t^*$). By the per-agent Lipschitz bounds above:
\[
    \Wass\bigl(\EE[\hat{\mu}_{t+1}^N \mid \hat{\mu}_t^N],\; \bar{\mu}_{t+1}\bigr) \leq \bigl(\tfrac{B}{N}\,L_P + \tfrac{N-B}{N}\,L_{P_0}\bigr)\, \Wass(\hat{\mu}_t^N, \mu_t^*).
\]
Combining both terms:
\begin{equation}
    \Wass\bigl(\EE[\hat{\mu}_{t+1}^N \mid \hat{\mu}_t^N],\; \mu_{t+1}^*\bigr)
    \leq \bigl[\tfrac{B}{N}(\rho_P + L_P) + \tfrac{N-B}{N}(\rho_{P_0} + L_{P_0})\bigr]\, \Wass(\hat{\mu}_t^N, \mu_t^*).
    \label{eq:bias_bound}
\end{equation}

\emph{Noise (first term).} Conditionally on $\hat{\mu}_t^N$, the $N$ agents transition independently. The conditional variance of $\hat{\mu}_{t+1}^N(o')$ for any $o'$ is:
\begin{equation}
    \mathrm{Var}\bigl[\hat{\mu}_{t+1}^N(o') \mid \hat{\mu}_t^N\bigr] = \frac{1}{N^2}\sum_{i=1}^N \mathrm{Var}\bigl[\mathbbm{1}[o_{t+1}^i = o'] \mid o_t^i, \hat{\mu}_t^N\bigr] \leq \frac{1}{4N},
    \label{eq:variance_bound}
\end{equation}
since each indicator $\mathbbm{1}[o_{t+1}^i = o']$ is Bernoulli with parameter $p_i \in [0,1]$ and variance $p_i(1-p_i) \leq 1/4$. For each $o'$, applying Jensen's inequality:
\[
    \EE\bigl[|\hat{\mu}_{t+1}^N(o') - \EE[\hat{\mu}_{t+1}^N(o') \mid \hat{\mu}_t^N]| \;\big|\; \hat{\mu}_t^N\bigr] \leq \sqrt{\mathrm{Var}[\hat{\mu}_{t+1}^N(o') \mid \hat{\mu}_t^N]} \leq \frac{1}{2\sqrt{N}}.
\]
Summing over all $o' \in \cO$, we obtain the per-step noise bound on $\Wass$:
\begin{align}
    \EE\bigl[\Wass(\hat{\mu}_{t+1}^N, \EE[\hat{\mu}_{t+1}^N \mid \hat{\mu}_t^N]) \mid \hat{\mu}_t^N\bigr]
    &= \sum_{o' \in \cO} \EE\bigl[|\hat{\mu}_{t+1}^N(o') - \EE[\hat{\mu}_{t+1}^N(o') \mid \hat{\mu}_t^N]| \;\big|\; \hat{\mu}_t^N\bigr] \notag \\
    &\leq \sum_{o' \in \cO} \frac{1}{2\sqrt{N}} = \frac{|\cO|}{2\sqrt{N}}.
    \label{eq:noise_bound}
\end{align}

\emph{Combined recursion.} Substituting Equation~\ref{eq:bias_bound} and Equation~\ref{eq:noise_bound} into Equation~\ref{eq:bias_noise_decomp}:
\begin{equation}
    \EE[\Wass(\hat{\mu}_{t+1}^N, \mu_{t+1}^*) \mid \hat{\mu}_t^N] \leq \bigl[\tfrac{B}{N}(\rho_P + L_P) + \tfrac{N-B}{N}(\rho_{P_0} + L_{P_0})\bigr]\, \Wass(\hat{\mu}_t^N, \mu_t^*) + \frac{|\cO|}{2\sqrt{N}}.
    \label{eq:combined_recursion}
\end{equation}

\paragraph{Step 3: Error accumulation.} Denote $\alpha = \tfrac{B}{N}(\rho_P + L_P) + \tfrac{N-B}{N}(\rho_{P_0} + L_{P_0})$ (the coefficient in Equation~\ref{eq:combined_recursion}) and let $e_t = \EE[\Wass(\hat{\mu}_t^N, \mu_t^*)]$. Taking unconditional expectations in Equation~\ref{eq:combined_recursion}:
\begin{equation}
    e_{t+1} \leq \alpha\, e_t + \frac{|\cO|}{2\sqrt{N}}.
    \label{eq:et_recursion}
\end{equation}
Iterating $t$ times:
\begin{equation}
    e_t \leq \alpha^t\, e_0 + \frac{|\cO|}{2\sqrt{N}} \sum_{j=0}^{t-1} \alpha^j
    = \alpha^t\, e_0 + \frac{|\cO|}{2\sqrt{N}} \cdot \frac{1 - \alpha^t}{1 - \alpha}.
    \label{eq:iteration}
\end{equation}
The condition $\alpha < 1$ in Theorem~\ref{thm:n_approx} ensures that the geometric sum converges. When $\rho_P + L_P < \rho_{P_0} + L_{P_0}$ (active transitions contract more), solving $\alpha < 1$ for $B/N$ yields the equivalent lower bound
\[
    \frac{B}{N} > \frac{\rho_{P_0} + L_{P_0} - 1}{(\rho_{P_0} + L_{P_0}) - (\rho_P + L_P)},
\]
as stated in Remark~\ref{rem:tradeoff}. Under either form, $\frac{1}{1-\alpha}$ is a finite constant independent of $T$.

Since $\alpha < 1$, Equation~\ref{eq:iteration} gives a uniform bound over all $t$: $\alpha^t \leq 1$ and $\frac{1-\alpha^t}{1-\alpha} \leq \frac{1}{1-\alpha}$, so
\[
    e_t \leq e_0 + \frac{|\cO|}{2\sqrt{N}(1-\alpha)}, \quad \text{for all } t \geq 0.
\]
Substituting $e_0 \leq \frac{|\cO|}{2\sqrt{N}}$ from Step~1 (Equation~\ref{eq:initial_error}):
\begin{equation}
    \sup_{t \geq 0}\, e_t \leq \frac{|\cO|}{2\sqrt{N}} + \frac{|\cO|}{2\sqrt{N}(1-\alpha)} = \frac{|\cO|}{2\sqrt{N}} \cdot \frac{2-\alpha}{1-\alpha} \leq \frac{C}{\sqrt{N}},
    \label{eq:sup_bound}
\end{equation}
where $C = \frac{|\cO|(2-\alpha)}{2(1-\alpha)}$ depends on $\rho_P$, $L_P$, $\rho_{P_0}$, $L_{P_0}$, and $|\cO|$, but not on $T$.

\paragraph{Step 4: Exploitability bound.}
To complete the proof, we need to bound $\EE[\max_t \Wass(\hat{\mu}_t^N, \mu_t^*)]$ in Equation~\ref{eq:exploit_reduction}. Since the exploitability involves at most $T+1$ decision periods, we can pass from $\sup_t \EE[\cdot]$ in Equation~\ref{eq:sup_bound} to $\EE[\max_t (\cdot)]$:
\[
    \EE\bigl[\max_{0 \leq t \leq T} \Wass(\hat{\mu}_t^N, \mu_t^*)\bigr] \leq \sum_{t=0}^{T} \EE[\Wass(\hat{\mu}_t^N, \mu_t^*)] \leq (T+1)\, \sup_t e_t \leq \frac{(T+1)\, C}{\sqrt{N}}.
\]
This bound is loose: it replaces $\EE[\max_t]$ with $(T+1)\,\sup_t \EE[\cdot]$, but since both quantities are $O(1/\sqrt{N})$, the looseness only affects the constant $C'$, not the scaling in $N$. Substituting this bound into Equation~\ref{eq:exploit_reduction}:
\begin{equation}
    \epsilon_N \leq 2L_V \cdot \frac{(T+1)\, C}{\sqrt{N}} + O(L_V/N) \leq \frac{C'}{\sqrt{N}},
\end{equation}
where the constant $C'$ depends on $L_r$, $L_P$, $L_{P_0}$, $|\cO|$, and $T$, but not on $B$. Since the bound holds for each fixed $B$ with the same constant $C'$, it holds uniformly over all $B \in \{1, \ldots, N\}$. In particular, the result extends to time-varying batch sizes $B_t$, as each stage's contribution to the recursion depends only on the Lipschitz constants and not on $B_t$. $\hfill\square$

\section{Proof of Theorem~\ref{thm:convergence}}
\label{app:proof_convergence}

\noindent\textbf{Theorem~\ref{thm:convergence}} (Convergence of TMF-PG).\quad
\emph{Under the conditions of Theorem~\ref{thm:existence_uniqueness}, suppose the backward pass at each iteration $k$ produces a policy $\pi_k$ with $\|\pi_k - \pi_{\boldsymbol{\mu}^{(k)}}\|_\infty \leq \epsilon_k$ and $\epsilon_k \to 0$. Then $\max_t \Wass(\mu_t^{(k)}, \mu_t^*) \to 0$ and $\|\pi_k - \pi^*\|_\infty \to 0$. The convergence holds for any $B$ satisfying the condition of Theorem~\ref{thm:n_approx}.}

\noindent\textit{Proof.}\quad We work in the large-population limit, where the rollout concentrates around the deterministic forward map $\mathrm{Fwd}$, yielding a deterministic trajectory $\boldsymbol{\mu}^{(k)} = (\mu_0^{(k)}, \mu_1^{(k)}, \ldots, \mu_T^{(k)})$ rather than a stochastic empirical distribution. The gap between the deterministic trajectory and the finite-$N$ empirical distribution is bounded by Theorem~\ref{thm:n_approx}. In this deterministic setting, each TMF-PG iteration approximately applies the operator $\Gamma = \mathrm{Fwd} \circ \mathrm{BR}$: the policy update produces an approximate best response $\pi_k \approx \mathrm{BR}(\boldsymbol{\mu}^{(k)})$ with error $\epsilon_k \to 0$, and the rollout produces the next trajectory $\boldsymbol{\mu}^{(k+1)} = \mathrm{Fwd}(\pi_k)$. We show that these approximate iterations converge to the unique TMF equilibrium $\boldsymbol{\mu}^*$. The proof proceeds in three steps.

\textbf{Step 1: $\Gamma$ is continuous.} Recall that $\Gamma = \mathrm{Fwd} \circ \mathrm{BR}$, where $\mathrm{BR}$ is $L_\pi$-Lipschitz by Assumption~\ref{asmp:lipschitz}. Since $\mathrm{BR}$ is Lipschitz, it suffices to show that $\mathrm{Fwd}$ is Lipschitz. Consider two policies $\pi$, $\pi'$ with $\|\pi - \pi'\|_\infty \leq \epsilon$, and let $\boldsymbol{\mu} = \mathrm{Fwd}(\pi)$, $\boldsymbol{\mu}' = \mathrm{Fwd}(\pi')$. Both trajectories share the same initial condition $\mu_0 = \mu_0'$, so $\Wass(\mu_0, \mu_0') = 0$. We derive a per-step recursion bounding $\Wass(\mu_{t+1}, \mu_{t+1}')$ in terms of $\Wass(\mu_t, \mu_t')$ and $\epsilon$, then unroll it to obtain $\max_t \Wass(\mu_t, \mu_t') \leq L_F\, \epsilon$.

\emph{Per-step decomposition.} To bound the gap $\Wass(\mu_{t+1}, \mu_{t+1}')$ at step $t$, we introduce an intermediate distribution $\tilde{\mu}_{t+1}$, defined as the result of evolving $\mu_t'$ forward one step under $\pi$ (rather than $\pi'$). This separates the gap into two sources: (a) the \emph{inherited gap} from different starting distributions $\mu_t$ vs.\ $\mu_t'$, and (b) the \emph{policy discrepancy} from using $\pi$ vs.\ $\pi'$ on the same $\mu_t'$. By the triangle inequality:
\[
    \Wass(\mu_{t+1}, \mu_{t+1}')
    \leq \underbrace{\Wass(\mu_{t+1}, \tilde{\mu}_{t+1})}_{\text{inherited gap (contracts)}}
    + \underbrace{\Wass(\tilde{\mu}_{t+1}, \mu_{t+1}')}_{\text{policy discrepancy}}.
\]

\emph{Inherited gap (first term).} Both $\mu_{t+1}$ and $\tilde{\mu}_{t+1}$ are evolved under the same policy $\pi$, but from different distributions $\mu_t$ and $\mu_t'$. By the same per-agent coupling as in the bias bound (Equation~\ref{eq:bias_bound}):
\begin{equation}
    \Wass(\mu_{t+1}, \tilde{\mu}_{t+1}) \leq \alpha\, \Wass(\mu_t, \mu_t'),
    \label{eq:inherited_gap}
\end{equation}
where $\alpha = \tfrac{B}{N}(\rho_P + L_P) + \tfrac{N-B}{N}(\rho_{P_0} + L_{P_0}) < 1$ is the contraction coefficient from Equation~\ref{eq:combined_recursion}.

\emph{Policy discrepancy (second term).} Both $\tilde{\mu}_{t+1}$ and $\mu_{t+1}'$ are evolved from the same $\mu_t'$, but under $\pi$ vs.\ $\pi'$. Expanding both using the forward model (Equation~\ref{eq:mu_expected}):
\begin{align*}
    \tilde{\mu}_{t+1}(o') &= \sum_o \mu_t'(o) \Bigl[\tfrac{B}{N} \textstyle\sum_a \pi(a|o,\mu_t')\, P(o'|o,a,\mu_t') + \tfrac{N-B}{N}\, P_0(o'|o,\mu_t')\Bigr], \\
    \mu_{t+1}'(o') &= \sum_o \mu_t'(o) \Bigl[\tfrac{B}{N} \textstyle\sum_a \pi'(a|o,\mu_t')\, P(o'|o,a,\mu_t') + \tfrac{N-B}{N}\, P_0(o'|o,\mu_t')\Bigr].
\end{align*}
Since both use the same base distribution $\mu_t'$ and the same population state $\mu_t'$ in both $P$ and $P_0$, the passive terms $\tfrac{N-B}{N} P_0(o'|o,\mu_t')$ are identical ($P_0$ does not depend on the policy). Subtracting:
\[
    \tilde{\mu}_{t+1}(o') - \mu_{t+1}'(o')
    = \frac{B}{N} \sum_o \mu_t'(o) \sum_a \bigl[\pi(a|o,\mu_t') - \pi'(a|o,\mu_t')\bigr]\, P(o'|o,a,\mu_t').
\]
Taking the $\ell_1$ norm over $o'$, since $\mu_t'(o) \geq 0$ and $P(o'|o,a,\mu_t') \geq 0$, the triangle inequality moves the absolute value inside:
\begin{align}
    \Wass(\tilde{\mu}_{t+1}, \mu_{t+1}')
    &= \sum_{o'} \bigl|\tilde{\mu}_{t+1}(o') - \mu_{t+1}'(o')\bigr| \nonumber\\
    &\leq \frac{B}{N} \sum_{o'} \sum_o \mu_t'(o) \sum_a |\pi(a|o,\mu_t') - \pi'(a|o,\mu_t')|\, P(o'|o,a,\mu_t') \nonumber\\
    &= \frac{B}{N} \sum_o \mu_t'(o) \sum_a |\pi(a|o,\mu_t') - \pi'(a|o,\mu_t')|\, \underbrace{\sum_{o'} P(o'|o,a,\mu_t')}_{=\,1} \nonumber\\
    &= \frac{B}{N} \sum_o \mu_t'(o) \sum_a |\pi(a|o,\mu_t') - \pi'(a|o,\mu_t')| \nonumber\\
    &\leq \frac{B}{N} \underbrace{\sum_o \mu_t'(o)}_{=\,1}\; |\cA|\, \|\pi - \pi'\|_\infty
    = \frac{B\,|\cA|}{N}\, \epsilon.
    \label{eq:policy_discrepancy}
\end{align}

\emph{Combining and unrolling.} Adding the inherited gap (Equation~\ref{eq:inherited_gap}) and the policy discrepancy (Equation~\ref{eq:policy_discrepancy}) via the triangle inequality gives the per-step recursion:
\begin{equation}
    \Wass(\mu_{t+1}, \mu_{t+1}') \leq \alpha\, \Wass(\mu_t, \mu_t') + \tfrac{B\,|\cA|}{N}\, \epsilon.
    \label{eq:fwd_lipschitz_step}
\end{equation}
Since $\Wass(\mu_0, \mu_0') = 0$, unrolling this recursion $t$ times yields:
\[
    \Wass(\mu_t, \mu_t') \leq \tfrac{B\,|\cA|}{N}\, \epsilon \sum_{s=0}^{t-1} \alpha^s \quad \text{for each } t.
\]
Taking the maximum over $t = 0, \ldots, T$ and bounding the geometric sum:
\begin{equation}
    \max_{0 \leq t \leq T}\, \Wass(\mu_t, \mu_t')
    \leq \frac{B\,|\cA|}{N}\, \epsilon \cdot \sum_{s=0}^{T-1} \alpha^s
    = \frac{B\,|\cA|}{N} \cdot \frac{1 - \alpha^T}{1 - \alpha}\, \epsilon
    \leq \frac{B\,|\cA|}{N(1-\alpha)}\, \epsilon
    =: L_F\, \epsilon.
    \label{eq:fwd_lipschitz}
\end{equation}
This establishes that $\mathrm{Fwd}$ is $L_F$-Lipschitz. Since $\Gamma = \mathrm{Fwd} \circ \mathrm{BR}$ is a composition of two Lipschitz maps ($\mathrm{Fwd}$ with constant $L_F$ and $\mathrm{BR}$ with constant $L_\pi$), $\Gamma$ is Lipschitz with constant $L_\Gamma = L_F\, L_\pi$, and in particular continuous on the compact set $\Delta(\cO)^T$.


\textbf{Step 2: Trajectory convergence.} Let $d_k := \max_t \Wass(\mu_t^{(k)}, \mu_t^*)$ denote the gap between the $k$-th iterate and the equilibrium trajectory. We show $d_k \to 0$ by establishing a contracting recursion across iterations.

Since $\boldsymbol{\mu}^{(k+1)} = \mathrm{Fwd}(\pi_k)$ and $\boldsymbol{\mu}^* = \mathrm{Fwd}(\mathrm{BR}(\boldsymbol{\mu}^*)) = \Gamma(\boldsymbol{\mu}^*)$ (by definition of the TMF equilibrium), Step~1 gives
\begin{equation}
    d_{k+1} = \max_t \Wass\bigl(\mathrm{Fwd}(\pi_k)_t,\, \mathrm{Fwd}(\mathrm{BR}(\boldsymbol{\mu}^*))_t\bigr) \leq L_F\, \|\pi_k - \mathrm{BR}(\boldsymbol{\mu}^*)\|_\infty.
    \label{eq:dk_fwd}
\end{equation}
By the triangle inequality and the $L_\pi$-Lipschitz property of $\mathrm{BR}$ (Assumption~\ref{asmp:lipschitz}):
\begin{equation}
    \|\pi_k - \mathrm{BR}(\boldsymbol{\mu}^*)\|_\infty
    \leq \underbrace{\|\pi_k - \mathrm{BR}(\boldsymbol{\mu}^{(k)})\|_\infty}_{\leq\, \epsilon_k}
    + \underbrace{\|\mathrm{BR}(\boldsymbol{\mu}^{(k)}) - \mathrm{BR}(\boldsymbol{\mu}^*)\|_\infty}_{\leq\, L_\pi\, d_k}.
    \label{eq:pi_triangle}
\end{equation}
Substituting Equation~\ref{eq:pi_triangle} into Equation~\ref{eq:dk_fwd}:
\begin{equation}
    d_{k+1} \leq L_F\, (\epsilon_k + L_\pi\, d_k).
    \label{eq:dk_expanded}
\end{equation}
Recalling $L_F = \frac{B\,|\cA|}{N(1-\alpha)}$ from Equation~\ref{eq:fwd_lipschitz} and defining $L_\Gamma := L_F\, L_\pi$:
\[
    L_\Gamma = \frac{B\,|\cA|\, L_\pi}{N(1-\alpha)},
\]
so $L_\Gamma < 1$ whenever $N > \frac{B\,|\cA|\, L_\pi}{1-\alpha}$, which holds in the large-population regime. Equation~\ref{eq:dk_expanded} then becomes the contracting recursion:
\begin{equation}
    d_{k+1} \leq L_\Gamma\, d_k + L_F\, \epsilon_k.
    \label{eq:dk_recursion}
\end{equation}
Unrolling by induction on $k$ gives
\begin{equation}
    d_k \leq L_\Gamma^k\, d_0 + L_F \sum_{j=0}^{k-1} L_\Gamma^{k-1-j}\, \epsilon_j.
    \label{eq:dk_unrolled}
\end{equation}
The first term satisfies $L_\Gamma^k\, d_0 \to 0$ since $L_\Gamma < 1$. It remains to show that the second term in Equation~\ref{eq:dk_unrolled} also vanishes. Fix any $\delta > 0$; we show that the second term is eventually less than $\delta$.

Since $\epsilon_j \to 0$ by assumption, we can choose $K$ large enough that $\epsilon_j$ is small for all $j \geq K$. Specifically, choose $K$ such that
\begin{equation}
    \epsilon_j \leq \frac{\delta(1 - L_\Gamma)}{2\, L_F} \quad \text{for all } j \geq K.
    \label{eq:K_choice}
\end{equation}
Now split the sum at $j = K$. For the terms $j \geq K$, substitute Equation~\ref{eq:K_choice} and apply the geometric series formula $\sum_{s=0}^{\infty} r^s = \frac{1}{1-r}$ (with $r = L_\Gamma < 1$):
\begin{align}
    L_F \sum_{j=K}^{k-1} L_\Gamma^{k-1-j}\, \epsilon_j
    &\leq \frac{\delta(1-L_\Gamma)}{2} \sum_{j=K}^{k-1} L_\Gamma^{k-1-j} \notag\\
    &\leq \frac{\delta(1-L_\Gamma)}{2} \cdot \frac{1}{1-L_\Gamma}
    = \frac{\delta}{2}.
    \label{eq:late_bound}
\end{align}
For the terms $j < K$, there are only finitely many (at most $K$), and each carries the factor $L_\Gamma^{k-1-j}$. Since $j \leq K-1$ implies $k-1-j \geq k-K$, the entire sum can be bounded by $L_\Gamma^{k-K}$ times a finite constant independent of $k$:
\begin{align}
    L_F \sum_{j=0}^{K-1} L_\Gamma^{k-1-j}\, \epsilon_j
    &= L_\Gamma^{k-K} \cdot L_F \sum_{j=0}^{K-1} L_\Gamma^{K-1-j}\, \epsilon_j \notag\\
    &\leq L_\Gamma^{k-K} \cdot \frac{L_F\, \max_{j < K} \epsilon_j}{1-L_\Gamma}.
    \label{eq:early_bound}
\end{align}
Since $L_\Gamma^{k-K} \to 0$ as $k \to \infty$, there exists $k_0$ such that Equation~\ref{eq:early_bound} is also less than $\delta/2$ for all $k \geq k_0$. Substituting Equation~\ref{eq:late_bound} and Equation~\ref{eq:early_bound} into Equation~\ref{eq:dk_unrolled}:
\[
    d_k \leq L_\Gamma^k\, d_0 + \delta \quad \text{for all } k \geq k_0.
\]
Taking $k \to \infty$ and using $L_\Gamma^k\, d_0 \to 0$:
\[
    \limsup_{k \to \infty}\, d_k \leq \delta.
\]
Since $\delta > 0$ was arbitrary, $d_k \to 0$.

\textbf{Step 3: Policy convergence.} It remains to show $\|\pi_k - \pi^*\|_\infty \to 0$. Since $\pi^* = \mathrm{BR}(\boldsymbol{\mu}^*)$ by the TMF equilibrium definition (Definition~\ref{def:mfe}), the triangle inequality gives:
\[
    \|\pi_k - \pi^*\|_\infty
    \leq \|\pi_k - \mathrm{BR}(\boldsymbol{\mu}^{(k)})\|_\infty
    + \|\mathrm{BR}(\boldsymbol{\mu}^{(k)}) - \mathrm{BR}(\boldsymbol{\mu}^*)\|_\infty.
\]
The first term is at most $\epsilon_k \to 0$ by the theorem assumption on the backward pass. The second term is at most $L_\pi\, d_k$ by the $L_\pi$-Lipschitz property of $\mathrm{BR}$ (Assumption~\ref{asmp:lipschitz}), and $d_k \to 0$ by Step~2. Therefore $\|\pi_k - \pi^*\|_\infty \to 0$.

The TMF equilibrium to which the iterates converge is independent of the batch size $B$ (Theorem~\ref{thm:existence_uniqueness}), since the Lipschitz constants $L_P$, $L_{P_0}$, $L_\pi$, and the monotonicity constant $\eta$ are all properties of the game and do not depend on $B$. $\hfill\square$

\begin{remark}
Theorem~\ref{thm:convergence} establishes convergence in the mean-field limit: as the number of iterations $k$ grows, the trajectory $\boldsymbol{\mu}^{(k)}$ converges to the TMF equilibrium trajectory $\boldsymbol{\mu}^*$ and the policy $\pi_k$ converges to $\pi^*$. For a finite system with $N$ agents, Theorem~\ref{thm:n_approx} bounds the gap between the $N$-agent empirical distribution and the mean-field trajectory under any fixed shared policy by $O(1/\sqrt{N})$. Combining these two results, after sufficiently many iterations, the $N$-agent system under the learned policy operates within $O(1/\sqrt{N})$ of the TMF equilibrium in the Wasserstein metric.
\end{remark}

\section{Additional Experimental Results}
\label{app:experiments_detail}

This appendix provides supplementary experimental results referenced in Section~\ref{sec:experiments}.

\subsection{SRSG: Prediction Error across $N \times B$}

Figure~\ref{fig:prediction_heatmap} shows the $L_1$ prediction error for the forward model across the full $N \times B$ grid.

\begin{figure}[H]
    \centering
    \includegraphics[width=0.55\textwidth]{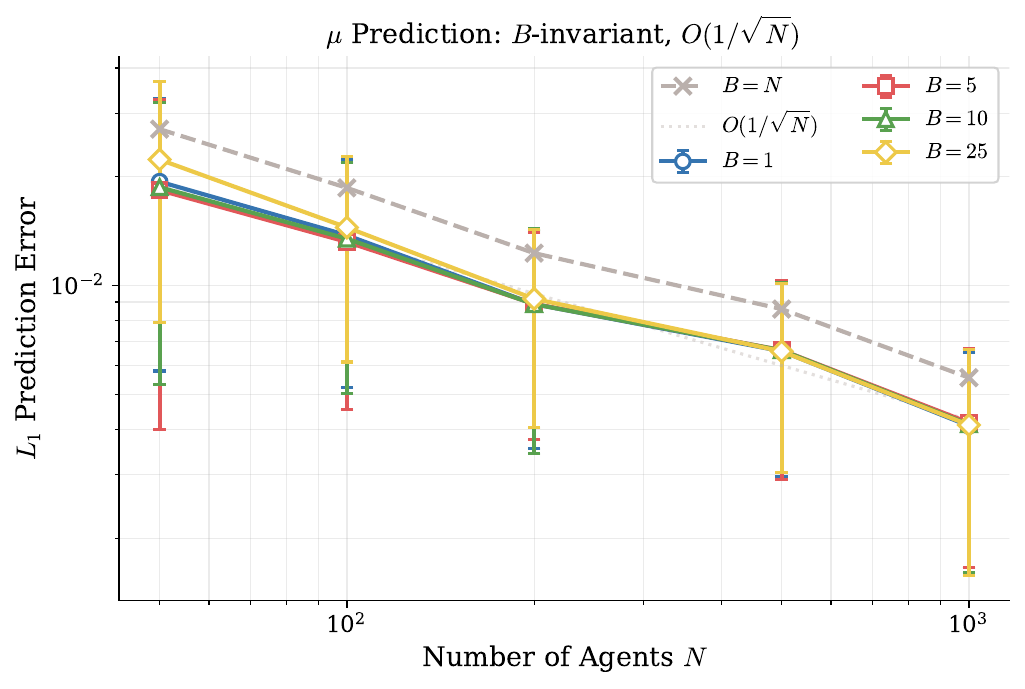}
    \caption{SRSG: $L_1$ forward prediction error across the $N \times B$ grid (40 seeds). For each $N$, error is nearly constant across $B$; for each $B$, error decays as $O(1/\sqrt{N})$ (dashed reference line).}
    \label{fig:prediction_heatmap}
\end{figure}

Two key observations confirm Theorem~\ref{thm:n_approx}:

\begin{enumerate}
    \item \textbf{Stability across $B$:} Fixing $N$ and varying $B$, the prediction error is nearly constant. For example, at $N = 500$, the error is $0.007$ for $B = 1, 5, 10,$ and $25$; only the fully synchronous case ($B = N$) is slightly higher ($0.009$), since a single batch offers no averaging across steps.
    \item \textbf{$O(1/\sqrt{N})$ decay:} Fixing $B$ and varying $N$, the error decreases from $\approx 0.019$ ($N = 50$) to $\approx 0.004$ ($N = 1000$), matching the theoretical rate.
\end{enumerate}

\subsection{DQG: Sensitivity Analysis}

Figure~\ref{fig:dqg_sensitivity} examines TMF-PG's advantage over Myopic under two parameter sweeps, both at $N = 50$: the cliff multiplier $\kappa$, which controls how sharply the service rate drops when a server's queue exceeds its capacity (lower $\kappa$ = harsher penalty), and the horizon $H$, the total number of decision steps.

\begin{figure}[H]
    \centering
    \begin{subfigure}[b]{0.48\textwidth}
        \centering
        \includegraphics[width=\textwidth]{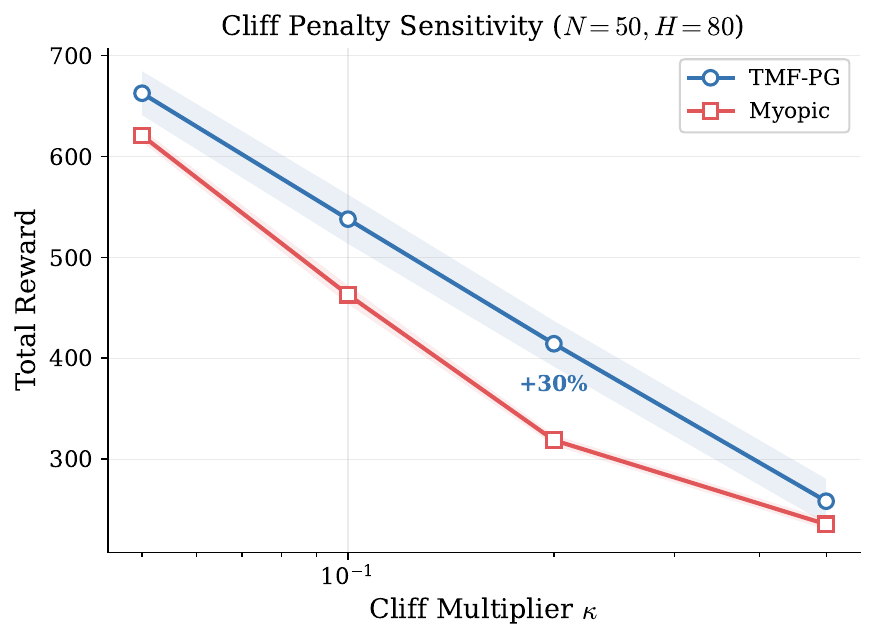}
        \caption{Cliff multiplier sweep ($H = 80$)}
        \label{fig:dqg_sensitivity_a}
    \end{subfigure}
    \hfill
    \begin{subfigure}[b]{0.48\textwidth}
        \centering
        \includegraphics[width=\textwidth]{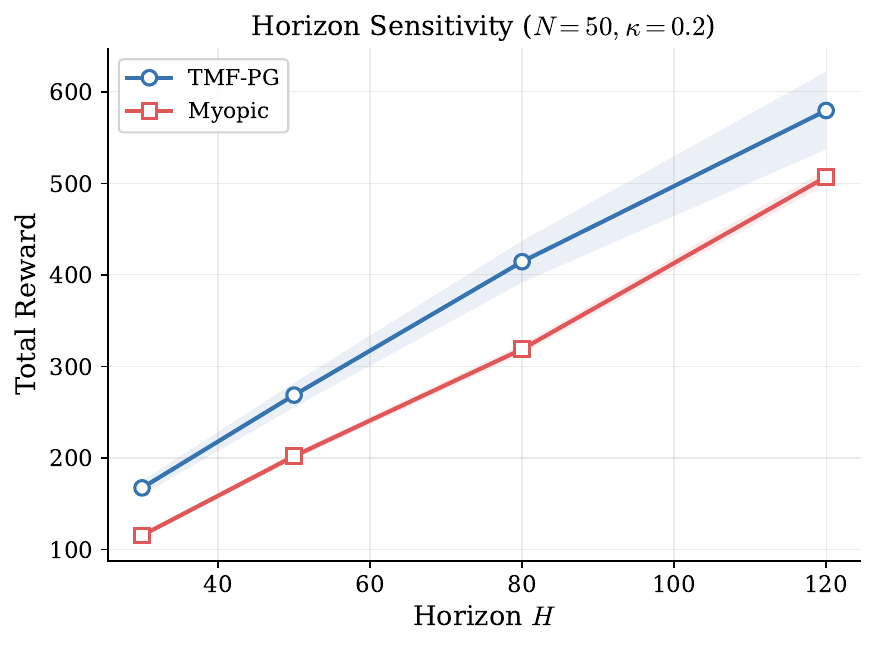}
        \caption{Horizon sweep ($\kappa = 0.2$)}
        \label{fig:dqg_sensitivity_b}
    \end{subfigure}
    \caption{DQG sensitivity analysis ($N = 50$). TMF-PG consistently outperforms Myopic across all tested cliff intensities and horizons.}
    \label{fig:dqg_sensitivity}
\end{figure}

\paragraph{Cliff sensitivity (Figure~\ref{fig:dqg_sensitivity_a}).}
TMF-PG outperforms Myopic across all tested cliff intensities ($\kappa \in [0.05, 0.5]$). The advantage is most pronounced at moderate values ($\kappa = 0.2$: $+30\%$), where the penalty is severe enough to punish naive allocation but the problem retains nontrivial strategic structure. At mild cliff ($0.05$: $+7\%$) the penalty is weak, so even greedy allocation rarely triggers it; at heavy cliff ($0.5$: $+10\%$) the strong gradient aids reactive avoidance. Across the full range, anticipating future arrivals via the forward model consistently yields higher total reward.

\paragraph{Horizon sensitivity (Figure~\ref{fig:dqg_sensitivity_b}).}
TMF-PG's advantage holds across all tested horizons ($H \in [30, 120]$). At $H = 30$ ($+45\%$), the advantage is largest because the short horizon amplifies the cost of early misallocation---a poor initial placement cannot be corrected by subsequent service cycles. As the horizon grows, agent cycling dilutes early mistakes, but TMF-PG retains a meaningful edge: $+30\%$ at $H = 80$ and $+14\%$ at $H = 120$. The diminishing but persistent advantage confirms that forward planning remains valuable even as the mixing time of the queueing dynamics increases.

\end{document}